\newif\iflongversion
\newtheorem{theorem}{Theorem}
\newtheorem{lemma}{Lemma}
\newtheorem{definition}{Definition}
\newtheorem{remark}{Remark}
\newcommand{\CI}{CIB}
\DeclareMathOperator*{\argmax}{arg\,max\,}
\begin{document}

\title{
%
Dynamic Games with Asymmetric Information: Common Information Based Perfect Bayesian Equilibria and Sequential Decomposition
}

\author{Yi Ouyang, Hamidreza Tavafoghi and Demosthenis Teneketzis
\thanks{This research was supported in part by NSF grants CCF-1111061 and CNS-1238962.
\newline\hspace*{5pt}
Y. Ouyang, H. Tavafoghi and D. Teneketzis are with the Department of Electrical Engineering and Computer Science, University of Michigan, Ann Arbor, MI (e-mail: ouyangyi@umich.edu; tavaf@umich.edu; teneket@umich.edu).
\newline\hspace*{5pt}
A preliminary version of this paper will appear in \textit{the Proceeding of the 54th IEEE Conference on Decision and Control (CDC)}, Osaka, Japan, December 2015.
}
}

\date{\today}
\maketitle
\begin{abstract}
We formulate and analyze a general class of stochastic dynamic games with asymmetric information arising in dynamic systems. In such games, multiple strategic agents control the system dynamics and have different information about the system over time.
Because of the presence of asymmetric information, each agent needs to form beliefs about other agents' private information. 
Therefore, the specification of the agents' beliefs along with their strategies is necessary to study the dynamic game.
We use Perfect Bayesian equilibrium (PBE) as our solution concept. A PBE consists of a pair of strategy profile and belief system.
In a PBE, every agent's strategy should be a best response under the belief system, and the belief system depends on agents' strategy profile when there is signaling among agents.
Therefore, the circular dependence between strategy profile and belief system makes it difficult to compute PBE. 
Using the common information among agents, we introduce a subclass of PBE called common information based perfect Bayesian equilibria (\CI-PBE), and provide a sequential decomposition of the dynamic game. Such decomposition leads to a backward induction algorithm to compute \CI-PBE. We illustrate the sequential decomposition with an example of a multiple access broadcast game. We prove the existence  of \CI-PBE for a subclass of dynamic games.
\end{abstract}

\begin{IEEEkeywords}
stochastic dynamic games, asymmetric information, perfect Bayesian equilibrium, common information, sequential decomposition
\end{IEEEkeywords}

\section{Introduction}
\label{sec:intro}
\subsection*{Background and Motivation}

Stochastic dynamic games arise in many socio-technological systems such as cyber-security systems, electronic commerce platforms, communication networks, etc.
In all these systems, there are many strategic decision makers (agents). 
In dynamic games with symmetric information all the agents share the same information and each agent makes decisions anticipating other agents' strategies. This class of dynamic games has been extensively studied in the literature (see \cite{fudenberg1991game,osborne1994course,basar1995dynamic,myerson2013game,filar1996competitive} and references therein). An appropriate solution concept for this class of games is sub-game perfect equilibrium (SPE), which consists of a strategy profile of agents that must satisfy \textit{sequential rationality} \cite{fudenberg1991game,osborne1994course}. The common history in dynamic games with symmetric information can be utilized to provide a sequential decomposition of the dynamic game. The common history (or a function of it) serves as an information state and SPE can be computed through backward induction.

Many instances of stochastic dynamic games involve asymmetric information, that is, 
agents have different information over time (such games are also called dynamic games of incomplete information in the game and economic theory literature).
In communication networks, different nodes have access to different local observations of the network.
In electronic commerce systems, each seller has private information about the quality of his product.
In cyber-security systems, a defender can not directly detect the attacker's activities. 
In this situation, if an agent wants to assess the performance of any particular strategy, he needs to form beliefs (over time) about the other agents' private information that is relevant to his objective.
Therefore, \textit{perfect Bayesian equilibrium} (PBE) is an appropriate solution concept for this class of games. PBE consists of a pair of strategy profile and belief system for all agents that jointly must satisfy \textit{sequential rationality} and \textit{consistency} \cite{fudenberg1991game,osborne1994course}. In games with asymmetric information a decomposition similar to that of games with symmetric information is not possible in general. 
This is because the evaluation of an agent's strategy depends, in general, on the agent's beliefs about all other agents' private information over time. Since private information increases with time, the space of beliefs on the agents' private information grows with time.
As a result, sequential computation of equilibria for stochastic dynamic games with asymmetric information is available only for special instances (see \cite{zamir1992repeated,forges1992repeated,aumann1995repeated,mailath2006repeated,jones2012policy,renault2006value,gensbittel2012value,renault2012value,cardaliaguet2015markov,nayyar2014Common,gupta2014common} and references therein). 


In this paper, we consider a general model of a dynamic game with a finite number of agents/players in a system with asymmetric information.
The information available to an agent at any time can be decomposed into \textit{common information} and \textit{private information}. Common information refers to the part of an agent's information that is known by all agents; private information includes the part of an agent's information that is known only by that agent. 
We define a class of PBE and provide a sequential decomposition of the game through an appropriate choice of information state using ideas from the common information approach for decentralized decision-making, developed in \cite{nayyar2013decentralized}. The proposed equilibrium and the associated decomposition resemble Markov perfect equilibrium (MPE), defined in \cite{maskin2001markov} for dynamic games with symmetric information.

Games with asymmetric information have been investigated in the economic literature within the context of repeated games of incomplete information (see \cite{zamir1992repeated,forges1992repeated,aumann1995repeated,mailath2006repeated} and references therein).
A key feature of these games is the absence of dynamics.
The problems investigated in \cite{nayyar2014Common,gupta2014common,jones2012policy,renault2006value,gensbittel2012value,renault2012value,cardaliaguet2015markov} are the most closely related to our problem. The authors of \cite{jones2012policy,renault2006value,gensbittel2012value,renault2012value,cardaliaguet2015markov} analyze zero-sum games with asymmetric information. The authors of \cite{nayyar2014Common,gupta2014common} used a common information based methodology, inspired by \cite{nayyar2013decentralized}, to establish the concept of \textit{common information based Markov perfect equilibria}, and to achieve a sequential decomposition of the dynamic game that leads to a backward induction algorithm that determines such equilibria.
Our problem is different from those investigated in \cite{nayyar2014Common,gupta2014common,jones2012policy,renault2006value,gensbittel2012value,renault2012value,cardaliaguet2015markov} 
for the following reasons. It is a nonzero-sum game, thus, it is different from the problems analyzed in \cite{jones2012policy,renault2006value,gensbittel2012value,renault2012value,cardaliaguet2015markov}.
Our approach to analyzing dynamic games with asymmetric information is similar to that of \cite{nayyar2014Common,gupta2014common}; the key difference between our problem and those in \cite{nayyar2014Common,gupta2014common} is in the information structure. The information structure in \cite{nayyar2014Common,gupta2014common} is such that the agents' common information based (\CI) beliefs are \textit{strategy-independent}, therefore there is \textit{no signaling} effect. This naturally leads to the concept of common information based Markov perfect equilibrium. 
In our problem the information structure is such that the \CI\ beliefs are \textit{strategy-dependent}, thus \textit{signaling} is present.  
In such a case, the specification of a belief system along with a strategy profile is necessary to analyze the dynamic game.
Signaling is a key phenomenon present in stochastic dynamic games with asymmetric information. Since it plays a fundamental role in the class of games we investigate in this paper, we discuss its nature and its role below.
The discussion will allow us to clarify the nature of our problem, after we formulate it, and to contrast it with the existing literature, in particular \cite{nayyar2014Common,gupta2014common}.


\subsection*{Signaling}


In a dynamic game with asymmetric information, an agent's private information is not observed directly by other agents. 
Nevertheless, when an agent's strategy depends on his private information, part of this private information may be revealed/transmitted through his actions. 
We call such a strategy a \textit{private strategy}.
When the revealed information from an agent's private strategy is ``\textit{relevant}'' to other agents, the other agents utilize this information to make future decisions.
This phenomenon is referred to as \textit{signaling} in games \cite{kreps1994signalling} and in decentralized control \cite{ho1980team}. 
When signaling occurs, agents' beliefs about the system's private information (which is defined to be the union of all agents' private information) depend on the agents' strategies (see \cite{kreps1994signalling}).
Signaling may occur in games with asymmetric information depending on the system dynamics, the agents' utilities and the information structure of the game.
Below we identify game environments where signaling occurs, as well as environments where signaling does not occur.

To identify game environments where signaling occurs we need to precisely define what we mean by the statement: an agent's private information is ``\textit{relevant}'' to other agents. For that matter we define the concepts of payoff relevant and payoff irrelevant information.

%
%

We call a variable (e.g. the system state, an observation, or an action) \textit{payoff relevant} (respectively, \textit{payoff irrelevant}) to an agent at time $t$ if the agent's expected continuation utility at $t$ directly depends on (respectively, does not depend on) this variable given any fixed realization of all other variables\footnote{Decomposition of agents' types to payoff-relevant type and payoff-irrelevant type is a standard decomposition in the economic literature. Here, we use term 'variable' instead of 'type' to match with the existing literature in control theory. For a more rigorous definition consult with \cite[ch.9]{borgers2015introduction}.}. For instance, in a dynamic game with Markov dynamics where agents' utilities at each time only depend on the current states, the current states are payoff relevant and the history of previous states is payoff irrelevant.

There are four types of game environments depending on the payoff revelance of an agent's private information.

\begin{enumerate}[(a)]
\item Agent $n$'s private information at $t$ is payoff relevant to him at $t$ and from $t+1$ on, but payoff irrelevant to other agents from $t+1$ on. 
In this game environment, agent $n$ may use a private strategy at $t$ because his private information is payoff relevant to him at $t$. 
Then, other agents can infer part of agent $n$'s privation information at $t$ based on agent $n$'s action. Although this revealed private information is payoff irrelevant to other agents, they can use it to anticipate agent $n$'s future actions since this information is payoff relevant to agent $n$'s future utility.
In this game environment, signaling from agent $n$ to other agents occurs.

\item Agent $n$'s private information at $t$  is payoff irrelevant to him at $t$, and is payoff relevant to other agents from $t+1$ on. 
This class of games includes the classic \textit{cheap-talk game} \cite{crawford1982strategic}. In this game environment, other agents form beliefs about agent $n$'s private information at $t$ because it is payoff relevant to them. 
By using a private strategy, agent $n$ can affect other agents' beliefs about his private information, hence, affect other agents' future decisions. Signaling may occur in this situation if agent $n$ can improve his future utility when he \textit{signals} part of his private information through his actions (e.g. perfectly informative/separating equilibria in the cheap-talk game). 
There may be no signaling if by revealing part of his private information agent $n$ does not increase his future utility (e.g. uninformative/pooling equilibria in the cheap-talk game).

\item Agent $n$'s private information at $t$ is payoff relevant to him at $t$, and payoff relevant to other agents from $t+1$ on.  
This game environment has both effects discussed in the previous two environments. As a result, we may have signaling or no signaling from agent $n$, depending on whether or not he can improve his future utility by using a private strategy. 
Decentralized team problems are examples where signaling occurs, because signaling strategies can help the collaborating agents to achieve higher utilities (see \cite{nayyar2011sequential,nayyar2015signaling,ouyang2015signaling,ouyang2015common} for examples of signaling strategies in decentralized team problems).
Pooling equilibria in the classic two-step \textit{signaling game}  \cite{kreps1994signalling} is an example of no signaling.


\item Agent $n$'s private information at $t$ is payoff irrelevant to all agents, including himself, from $t+1$ on. 
In this game environment no signaling occurs. Even if agent $n$ uses a private strategy at $t$, since his private information is payoff irrelevant from $t+1$ on to all agents, no agent will incorporate it in their future decisions
 \footnote{If one of the agents incorporates the belief on this private information in his strategy from $t+1$ on, all other agents may also incorporate it. The argument is similar to situation (b) since all other agents will anticipate about how this agent will act. 
We note that, agents can use such payoff irrelevant information as a coordination instrument, and therefore, expand their strategy space thereby resulting in additional equilibria. As an example, consider a repeated prisoner's dilemma game with imperfect public monitoring of actions \cite[ch.	7]{mailath2006repeated}. The agents can form a punishment mechanism that results in new equilibria in addition to the repetition of the stage-game equilibrium. 
In general, the idea of such a punishment mechanism is used to proof different versions of folk theorem for different setups \cite{mailath2006repeated}.
However, we do not call this kind of cases signaling because the signals or actions of an agent serves only as a coordination instrument instead of transmitting private information form one agent to other agents.
}. The model in \cite{nayyar2014Common,gupta2014common} are examples of this situation where signaling of information does not occur.

\end{enumerate}

When signaling occurs in a game, all agents' beliefs on the system's private information
are strategy dependent.
Furthermore, each agent's choice of (private) strategy is based on the above mentioned beliefs, as they allow him to evaluate the strategy's performance. This circular dependence between strategies and beliefs makes the computation of equilibria for dynamic games a challenging problem when signaling occurs. This is not the case for games with no signaling effects. In these games, the agents' beliefs are strategy-independent and the circular dependence between strategies and belief breaks. Then, one can directly determine the agents' beliefs first, and then, compute the equilibrium strategies via backward induction \cite{nayyar2014Common,gupta2014common}.

%
%
%

\subsection*{Contribution}

The key contributions of the paper are: (1) The introduction of a subclass of PBE called \textit{common information based perfect Bayesian equilibria} (\CI-PBE) for dynamic games with asymmetric information. A \CI-PBE consists of a pair of strategy profile and a belief system that are sequentially rational and consistent. 
(2) The sequential decomposition of stochastic dynamic games with the asymmetric information through an appropriate choice of information state. This decomposition provides a backward induction algorithm to find \CI-PBE for dynamic games where signaling occurs. The decomposition and the algorithm are illustrated by an example from multiple access communication. 
(3) The existence of \CI-PBE for a subclass of stochastic dynamic games with asymmetric information.

\subsection*{Organization}
The paper is organized as follows. We introduce the model of dynamic games in Section \ref{sec:model}. In Section \ref{sec:solconcept}, we define the solution concept for our model and compare it with that for the standard extensive game form.
In Section \ref{sec:CIBequilibria}, we introduce the concept of \CI-PBE and provide a sequential decomposition of the dynamic game. In Section \ref{sec:example}, we provide an example of a multiple access broadcast game that illustrates the results of Section \ref{sec:CIBequilibria}. We prove the existence of \CI-PBE for a subclass of dynamic games in Section \ref{sec:existence}. We conclude in Section \ref{sec:conclusion}.  The proofs of all of our technical results appear in the Appendices \ref{app:CIB}-\ref{app:existence}.

\subsection*{Notation}
Random variables are denoted by upper case letters, their realization by the corresponding lower case letter.
In general, subscripts are used as time index while superscripts are used to index agents.
For time indices $t_1\leq t_2$, $X_{t_1:t_2}$ (resp. $f_{t_1:t_2}(\cdot)$) is the short hand notation for the variables $(X_{t_1},X_{t_1+1},...,X_{t_2})$ (resp.  functions $(f_{t_1}(\cdot),\dots,f_{t_2}(\cdot))$).
When we consider the variables (resp. functions) for all time, we drop the subscript and use $X$ to denote $X_{1:T}$ (resp. $f(\cdot)$ to denote $f_{1:T}(\cdot)$).
For variables $X^1_t,\dots,X^N_t$ (resp. functions $f^1_t(\cdot),\dots,f^N_t(\cdot)$), we use $X_t:=(X^1_t,\dots,X^N_t)$ (resp. $f_t(\cdot):=(f^1_t(\cdot),\dots,f^N_t(\cdot))$) to denote the vector of the set of variables (resp. functions) at $t$, and $X^{-n}_t:=(X^1_t,\dots,X^{n-1}_t,X^{n+1}_t,\dots,X^N_t)$ (resp. $f^{-n}_t(\cdot):=(f^1_t(\cdot),\dots, f^{n-1}_t(\cdot),f^{n+1}_t(\cdot),\dots,f^N_t(\cdot))$) to denote all the variables (resp. functions) at $t$ except that of the agent indexed by $n$.
$\mathbb{P}(\cdot)$ and $\mathbb{E}(\cdot)$ denote the probability and expectation of an event and a random variable, respectively.
For a set $\mathcal{X}$, $\Delta(\mathcal{X})$ denotes the set of all beliefs/distributions on $\mathcal{X}$.
For random variables $X,Y$ with realizations $x,y$, $\mathbb{P}(x|y) := \mathbb{P}(X=x|Y=y)$ and $\mathbb{E}(X|y) := \mathbb{E}(X|Y=y)$.
For a strategy $g$ and a belief (probability distribution) $\pi$, we use $\mathbb{P}^g_{\pi}(\cdot)$ (resp. $\mathbb{E}^g_{\pi}(\cdot)$) to indicate that the probability (resp. expectation) depends on the choice of $g$ and $\pi$. We use $\mathbf{1}_{\{x\}}(y)$ to denote the indicator that $X=x$ is in the event $\{Y=y\}$.

\section{System Model}
\label{sec:model}
Consider a dynamic game among $N$ strategic agents, indexed by $\mathcal{N}:=\{1,2,\dots,N\}$, in a system
over time horizon $\mathcal{T}:=\{1,2,\cdots,T\}$. 
Each agent $n\in\mathcal{N}$ is affiliated with a subsystem $n$. 
At every time $t\in \mathcal{T}$, the state of the system $(C_t,X_t)$ has two components: $C_t\in \mathcal{C}_t$ denotes the public state, and $X_t := (X^1_t,X^2_t,\dots,X^N_t)\in \mathcal{X}_t := \mathcal{X}^1_t\times \mathcal{X}^2_t\times\dots\times \mathcal{X}^N_t$, where $X^n_t$ denotes the local state of subsystem $n,n\in \mathcal{N}$. The public state $C_t$ is commonly observed by every agent, and the local state $X_t^n$ is privately observed by agent $n,n\in \mathcal{N}$. 

At time $t$, each agent $n$ simultaneously selects an action $A^n_t \in \mathcal{A}^n_t$. 
Given the control actions $A_t := (A^1_t,A^2_t,\dots,A^N_t)$, the public state and local states evolve as
\begin{align}
&C_{t+1}=f_t^c(C_t,A_t,W_t^{C}),
\label{eq:Cdynamics}\\
&X_{t+1}^n=f_t^n(X_t^n,A_{t},W_t^{n,X}),\;n\in\mathcal{N},
\label{eq:Sdynamics}
\end{align}
where random variables $W_t^{C}$ and $W_t^{n,X}$ capture the randomness in the evolution of the system, and $C_1, X^1_1,X^2_1,\dots,X^N_1$ are primitive random variables.

At the end of time $t$, after the actions are taken, each agent $n\in \mathcal{N}$ observes $Y_t:=(Y^1_t,Y^2_t,\dots,Y^N_t)$, where
\begin{align}
Y^n_t = h^n_t(X^n_t,A_t,W_t^{n,Y}) \in \mathcal{Y}^n_t,
\label{eq:Ydynamics}
\end{align}
and $W_t^{n,Y}$ denotes the observation noise.
From the system dynamics \eqref{eq:Sdynamics} and the observations model \eqref{eq:Ydynamics}, we define, for any $n\in\mathcal{N},t\in\mathcal{T}$, the probabilities $p^n_t(x^n_{t+1};x^n_t,a_t)$ and $q^n_t(y^n_t;x^n_t,a_t)$ such that for all $x^n_{t+1},x^n_t\in\mathcal{X}^n_t$, $y^n_t\in\mathcal{Y}^n_t$ and $a_t\in\mathcal{A}_t:=\mathcal{A}^1_t\times\dots\times\mathcal{A}^N_t$
\begin{align}
& p^n_t(x^n_{t+1};x^n_t,a_t):= \mathbb{P}(f_t^n(x_t^n,a_{t},W_t^{n,X})=x^n_{t+1}),
\\
& q^n_t(y^n_t;x^n_t,a_t):= \mathbb{P}(h^n_t(x^n_t,a_t,W_t^{n,Y})=y^n_{t}).
\end{align}

We assume that $\mathcal{C}_t,\mathcal{X}^n_t,\mathcal{A}^n_t$ and $\mathcal{Y}^n_t$ are finite sets for all $n\in \mathcal{N}$, for all $t\in \mathcal{T}$.\footnote{The results developed in Section \ref{sec:model}-\ref{sec:CIBequilibria} for finite $\mathcal{C}_t,\mathcal{X}^n_t,\mathcal{A}^n_t$ and $\mathcal{Y}^n_t$ still hold
when they
are continuous sets under some technical assumptions. The results of Section \ref{sec:existence} require $\mathcal{A}^n_t$ to be finite for all $n\in\mathcal{N}$, $t\in\mathcal{T}$.}
We also assume that the primitive random variables $\{C_1,X^n_1, W_t^{C}, W_t^{n,X}, W_t^{n,Y}, t\in \mathcal{T}, n\in\mathcal{N}\}$ are mutually independent.


The actions $A_t$ and the observations $Y_t:=(Y^1_t,Y^2_t,\dots,Y^N_t)$
are commonly observed by every agent.
Therefore, at time $t$, all agents have access to the common history $H^c_t$ defined to be
\begin{align}
H^c_t := \{C_{1:t}, A_{1:t-1}, Y_{1:t-1}\}.
\label{def:Hc}
\end{align}

Including private information, the history $H_t^n$ of agent $n$'s information, $n \in \mathcal{N},$ at $t$ is given by
\begin{align}
H^n_t := \{X^n_{1:t},H^c_t\}=\{X^n_{1:t},C_{1:t}, A_{1:t-1}, Y_{1:t-1}\}.
\label{def:HO}
\end{align}

Let $\mathcal{H}^c_t$ denote the set of all possible common histories at time $t\in \mathcal{T}$, and $\mathcal{H}^n_t$ denote the set of all possible information histories for agent $n\in\mathcal{N}$ at time $t\in\mathcal{T}$.

Define $H_t := \cup_{n\in\mathcal{N}} H^n_t = \{X_{1:t},H^c_t\}$ to be the history of states and observations of the whole system up to time $t$. The hirtory $H_t$ captures the system evolution up to time $t$. 

A behavioral strategy of agent $n,n \in \mathcal{N}$, is defined as a map $g^n_t: \mathcal{H}^n_t \mapsto \Delta(\mathcal{A}^n_t)$ where 
\begin{align}
\mathbb{P}^{g^n_t}(A^n_t = a^n_t|h^n_t) := g^n_t(h^n_t)(a^n_t) \text{ for all } a^n_t \in \mathcal{A}^n_t.
\end{align}


Let $\mathcal{G}^n_t$ denote the set of all possible behavioral strategies
\footnote{The results developed in this paper also holds when agent $n$'s set of admissible actions depends on his current private state. That is, $A^n_t \in \mathcal{A}^n_t(x^n_t) \subseteq \mathcal{A}^n_t$ and $g^n_t(h^n_t) \in \Delta(\mathcal{A}^n_t(x^n_t))$ when $h^n_t = (x^n_{1:t},h^c_t)$.} 
$g^n_t$ of user $n\in\mathcal{N}$ at time $t\in\mathcal{T}$.

At each time $t\in \mathcal{T}$, agent $n,n \in \mathcal{N},$ has a utility 
\begin{align}
U^n_t = \phi^n_t(C_t,X_t,A_t)
\end{align}
that depends on the state of the system at $t$, including the public state and all local states, and the actions taken at $t$ by all agents. 

Let  $g = (g^1,g^2,\dots,g^N)$ denote the strategy profile of all agents, where $g^n =(g^n_1,g^n_2,\dots,g^n_T)$. 
Then, the total expected utility of agent $n$ is given by
\begin{align}
U^n(g) = \mathbb{E}^g\left[\sum_{t=1}^T U^n_t\right] = \mathbb{E}^g\left[\sum_{t=1}^T \phi^n_t(C_t,X_t,A_t)\right].
\end{align}
Each agent wishes to maximize his total expected utility. 

The problem defined above is a stochastic dynamic game with asymmetric information. 

As discussed above, signaling may occur in games of asymmetric information. 
The game instances that can be captured by our model could belong to any of the four game environments (a)-(d) described in Section \ref{sec:intro}.

\section{Solution Concept}
\label{sec:solconcept}

For non-cooperative static games with complete information (resp. incomplete information), one can use Nash equilibrium  (resp. Bayesian Nash equilibrium) as a solution concept. A strategy profile $g\!=\!(g^1,\cdots,g^N)$ is a (Bayesian) Nash equilibrium, if there in no agent $n$ that can unilaterally deviate to another strategy $g'^n$ and get a higher expected utility. One can use (Bayesian) Nash equilibrium to analyze dynamic stochastic games.
However, the (Bayesian) Nash equilibrium solution concept ignores the dynamic nature of the system and only requires optimality with respect to any unilateral deviation from the equilibrium $g$ at the beginning of the game (time $1$). Requiring optimality only against unilateral deviation at time $1$ could lead to irrational situations such as non-credible threats \cite{fudenberg1991game,osborne1994course}. In dynamic games, a desirable equilibrium $g$ should guarantee that there is no profitable unilateral deviation for any agent at any stage of the game. That is, for any $t\in\mathcal{T}$, for any realization $h_t\in\mathcal{H}_t$ of the system evolution, the strategy $g_{t:T}=(g^1_{t:T},g^2_{t:T}\cdots,g^N_{t:T})$ must be a (Bayesian) Nash equilibrium of the continuation game that follows $h_t$. This requirement is called \textit{sequential rationality} \cite{fudenberg1991game,osborne1994course}. 

In this paper we study dynamic stochastic games of incomplete asymmetric information. At time $t$, the system evolution $H_t$ is not completely known to all agents; each agent $n\in\mathcal{N}$ only observes $H_t^n$ and has to form a belief about the complete system evolution $H_t$ up to time $t$. The belief that agent $n$ forms about $H_t$ depends in general on both $H_t^n$ and $g^{-n}_{1:t}$. Knowing the strategy of the other agents, agent $n$ can make inference about other agents' private information $X^{-n}_{1:t}$ from observing their actions. 
As pointed out in Section \ref{sec:intro}, this phenomenon is called signaling in games with asymmetric information. 
Signaling results in agents' beliefs that depend on the strategy profile $g$ (see the discussion in Section \ref{sec:intro}). 
Therefore, at an equilibrium such beliefs must be consistent with the equilibrium strategies via Bayes' rule. Moreover, the sequential rationality requirement must be satisfied with respect to the agents' beliefs. We call the collection of all agents' beliefs at all times a \textit{belief system}.
A pair of strategy profile and belief system that are mutually sequentially rational and consistent form a \textit{perfect Bayesian equilibrium} (PBE).
We use PBE as the solution concept in this paper to study the dynamic game defined in Sectin \ref{sec:model}.

We note that the system model we use in this paper is different from the standard model of extensive game form used in the game theory literatures \cite{fudenberg1991game,osborne1994course}. Specifically, the model of Section \ref{sec:model} is a state space model (that describes the stochastic dynamics of the system), while the extensive game form is based on the intrinsic model \cite{witsenhausen1975intrinsic} whose components are nature's moves and users' actions. The two models are equivalent within the context of sequential dynamic teams \cite{witsenhausen1988equivalent}. In order to analyze the dynamic game of the state space model of Section \ref{sec:model}, we need to provide the formal definition of PBE for our model in the following.


\subsection{Perfect Bayesian Equilibrium}
To provide a formal definition of PBE for our state space model defined in Section \ref{sec:model}, we first define histories of states, beliefs and signaling-free beliefs on histories of states. 

\begin{definition}[History of States]
The history of states at each time $t$ is defined to be $X_{1:t}$.
\end{definition}

Note that the history of states contains the trajectory of local state $X^n_{1:t}$ that is private information of agent $n, n\in\mathcal{N}$.

\begin{definition}[Belief System]
Let $\mu^n_t: \mathcal{H}^n_t \mapsto \Delta(\mathcal{X}_{1:t})$. For every history $h^{n}_t\in\mathcal{H}^n_t$, the map $\mu^n_t$ defines a belief for agent $n\in\mathcal{N}$ at time $t\in\mathcal{T}$ which is a probability distribution on the histories of states $X_{1:t}$. 
The collection of maps $\mu:=\{\mu^n_t,n\in\mathcal{N}, t\in\mathcal{T}\}$ is called a belief system on histories of states. 

\end{definition}

That is, given a belief system $\mu$, agent $n\in\mathcal{N}$ assigns the probability distribution $\mu^n_t(h^{n}_t)$ on $\mathcal{X}_{1:t}$ conditioning on the realized history of observations $h^n_t\in\mathcal{H}^n_t$ at $t\in\mathcal{T}$, by
\begin{align}
\mathbb{P}_{\mu}( x_{1:t}|h^n_t):=\mu^n_t(h^{n}_t)(x_{1:t}).
\label{eq:measuremu}
\end{align}

Then, given the beliefs $\mu^n_t(h^{n}_{t})$ for agent $n\in\mathcal{N}$ at $h^{n}_{t} = (x^n_{1:t},h^{c}_{t})\in\mathcal{H}^{n}_{t}$ and a strategy $g_t$ at $t\in\mathcal{T}$, 
when agent $n$ takes an action $a^n_t\in\mathcal{A}^n_t$, his belief about the system following $(h^n_t,a^n_t)$ is given by $\mathbb{P}^{g_t}_{\mu}(x_{1:t+1},y_t,a_t|h^n_t,a^n_t)$ for any $x_{1:t+1}\in\mathcal{X}_{1:t+1}, y_t\in\mathcal{Y}_t,a_t \in\mathcal{A}_t$, where
\begin{align}
 &\mathbb{P}^{g_t}_{\mu}(x_{1:t+1},y_t,a_t|h^n_t,a^n_t) \nonumber\\
:= &\mu^n_t(h^{n}_{t})(x_{1:t})\prod_{k\in\mathcal{N}} p^k_t(x^k_{t+1};x^k_t,a_t)q^k_t(y^k_{t};x^k_t,a_t) \nonumber\\
   &\quad\prod_{k\neq n}g^{k}_t(x^k_{1:t},h^c_t)(a^k_t).
\label{eq:conditionalupdate}
\end{align}

\begin{definition}[Signaling-Free Beliefs]
The signaling-free belief system $\hat \mu:=\{\hat \mu^n_t: \mathcal{H}^n_t \mapsto \Delta(\mathcal{X}_{1:t}),n\in\mathcal{N},t\in\mathcal{T}\}$ is defined on histories of states such that for each $n\in\mathcal{N}$, $t\in\mathcal{T}$, and
$h^{n}_t:=(x^n_{1:t},c_{1:t},a_{1:t-1},y_{1:t-1})\in\mathcal{H}^n_t$
\begin{align}
\hat \mu^n_t(h^n_t)(x_{1:t}) := &\mathbb{P}_{(A_{1:t-1}=a_{1:t-1})}(x_{1:t}|y_{1:t-1},x^n_{1:t})\nonumber\\
 &\text{ for any }x_{1:t} \in \mathcal{X}_{1:t}.
\label{eq:sigfreebef}
\end{align}
The right hand side of \eqref{eq:sigfreebef} gives the conditional probability of $\{X_{1:t} = x_{1:t}\}$ given $\{Y_{1:t-1}=y_{1:t-1},X^n_{1:t}=x^n_{1:t}\}$ when $A_{1:t-1}=a_{1:t-1}$. This conditional probability is computed using the realization $h_t^n$ of agent $n$'s  information, the subsystem dynamics \eqref{eq:Sdynamics}, and the observation model \eqref{eq:Ydynamics}.

\end{definition}

%

Note that the signaling-free belief $\hat \mu^n_t(h^n_t)$ is \textit{strategy-independent}. One can think $\hat \mu^n_t(h^n_t)$ as the belief generated by the open-loop strategy $(a_1,a_2,\cdots,a_{t-1})$, so there is no signaling and strategy-dependent inference present in the belief system. The role of signaling-free belief will become evident when we talk about consistency in the definition of PBE for the state space model described in Section \ref{sec:model}.

The beliefs defined above are used by the agents to evaluate the performance of their strategies. 
Sequential rationality requires that at any time instant, each agent's strategy is his best response under his belief about the system states.

This relation between a strategy profile $g$ and a belief system $\mu$ is formally defined as follows.

\begin{definition}[Sequential Rationality]
\label{def:seqrational}
A pair $(g,\mu)$ satisfies \textit{sequential rationality} if for every $n\in\mathcal{N}$,
$g^{n}_{t:T}$ is a solution to
\begin{align}
\sup_{g'^{n}_{t:T} \in \mathcal{G}^{n}_{t:T}}
\mathbb{E}^{g'^{n}_{t:T},g^{-n}}_{\mu}\left[\sum_{\tau=t}^T \phi^{n}_\tau(C_\tau,X_\tau,A_\tau)| h^{n}_t \right]
\label{eq:seqrational}
\end{align} 
for every $t\in\mathcal{T}$ and every history $h^{n}_t\in\mathcal{H}^{n}_t$, where $\mathbb{E}^{g'^{n}_{t:T},g^{-n}}_{\mu}[\cdot|h^n_t]$ is computed using the probability measure generated from \eqref{eq:measuremu}-\eqref{eq:conditionalupdate} using the belief system $\mu$ and the strategy profile $(g'^{n}_{t:T},g^{-n})$ given the realization $h^n_t$.
\end{definition}

The above definition of sequential rationality does not place any restriction on the belief system. However, rational agents should form their beliefs based on the strategies used by other agents. This consistency requirement is defined as follows.

\begin{definition}[Consistency]
\label{def:consistency}
A pair $(g,\mu)$ satisfies \textit{consistency} if $\mu$ can be computed by Bayes' rule whenever possible. 
That is, for $n\in \mathcal{N}, t\in\mathcal{T},$ such that $\mathbb{P}^{g_t}_{\mu}(y_t,a_t|h^n_t,a^n_t)>0$,
\begin{align}
\mu^n_{t+1}(h^{n}_{t+1})(x_{1:t+1})
= &\mathbf{1}_{\{x^n_{t+1}\}}(h^n_{t+1})
\frac{\mathbb{P}^{g_t}_{\mu}(x_{1:t+1},y_t,a_t|h^n_t,a^n_t)}{\mathbb{P}^{g_t}_{\mu}(x^n_{t+1},y_t,a_t|h^n_t,a^n_t)}
\nonumber\\
&\text{ for any }x_{1:t+1} \in \mathcal{X}_{1:t+1}
\label{eq:consist_positive}
\end{align}
where $\mathbb{P}^{g_t}_{\mu}(\cdot|h^n_t,a^n_t)$ is the probability measure given by \eqref{eq:conditionalupdate}.
Furthermore, when $\mathbb{P}^g_{\mu}(y_t,a_t|h^n_t,a^n_t)=0$, $\mu^n_{t+1}(h^{n}_{t+1})$ is a probability distribution in $\Delta(\mathcal{X}_{1:t+1})$ such that
\begin{align}
\mu^n_{t+1}(h^{n}_{t+1})(x_{1:t+1})= 0 
\text{ if } \hat\mu^n_{t+1}(h^{n}_{t+1})(x_{1:t+1}) = 0.
\label{eq:consist_sigfree}
\end{align}

\end{definition}

Note that the signaling-free belief system $\hat \mu$ is used in \eqref{eq:consist_sigfree} of the definition for consistency. We will explain in the discussion below the importance of signaling-free beliefs on agents' rational behavior.

Using the above definitions, we define PBE for the stochastic dynamic game with asymmetric information described by the model of Section \ref{sec:model}.

\begin{definition}[Perfect Bayesian Equilibrium]
A pair $(g,\mu)$ is called a \textit{perfect Bayesian equilibrium} (PBE) if it satisfies \textit{sequential rationality} and \textit{consistency}.
\end{definition}

\subsection{Discussion}

As we mentioned earlier, the state space model and the extensive game form are different but equivalent representations of sequential dynamic teams. 
We discuss the connection between these two models for dynamic games. The table below summarizes the key components of our state space model and the extensive game form (see \cite{fudenberg1991game, osborne1994course}).

\begin{center}

\begin{tabular}{|c|c|}
\hline
State Space Model & Extensive Game Form \\
\hline
State $X_t$ & No State \\
History $H_t$ & History of Actions\\
History $H^n_t$ & Information Sets \\
Belief $\mu^n_t(h^n_t)(x_{1:t})$ & Belief on an Information Set\\
Support of $\hat\mu^n_t(h^n_t)(x_{1:t})$ & Nodes in an Information Set \\
PBE & PBE \\
\hline
\end{tabular}

\end{center}

The state variable $X_t$ in the state space model allows us to easily describe the system dynamics by \eqref{eq:Sdynamics}. 
Without an explicit state variable in the extensive form, it may be complex to describe and analyze the system dynamics.
In the state space model, the system's evolution is captured by the history of states and observations $H_t$. This is the analogue of the history of (agents' and nature's) actions in the extensive game form that captures the game's evolution trajectory.
The history of information $H^n_t$ defined in the state space model includes all information available to agent $n$ at time $t$. 
This history determines what agent $n$ knows about the system, and is the analogue of an information set in the extensive game form.
Similarly, agent $n$'s belief on histories of states (conditional on $H^n_t$) in the state space model is the analogue of agent $n$'s belief over an information set in the extensive game form.

Generally, a belief $\mu^n_t(h^n_t)(x_{1:t})$ can have a fixed support that includes the entire state space $\mathcal{X}_{1:t}$. However, a belief on an information set has a variable support that includes nodes in that particular information set. 
Given a strategy profile, one can determine the belief using the Bayes' rule, given by \eqref{eq:consist_positive}, whenever possible for our state space model and (similarly) for the extensive game form model. However, when the denominator is zero in \eqref{eq:consist_positive}, or we reach an information set of measure zero in the extensive game form, one needs to assign values for the belief on $\mathcal{X}_{1:t}$ and on the nodes of the information set. In the extensive game form, the consistency condition allows for any arbitrary probability distribution over the nodes of the (reached) information set of measure zero. However, in our state space model we need to make sure that the belief assigned is consistent with the dynamics of the system. As a result, the belief does not necessarily assign a positive probability to a history of states and must be more carefully defined. This is where signaling-free beliefs play an important role.

To establish the equivalence between the belief $\mu^n_t(h^n_t)(x_{1:t})$ in our state space model and the belief on the information set of the corresponding extensive game form, we introduce the signaling-free beliefs. 
The signaling-free belief $\mu^n_t(h^n_t)(x_{1:t})$ defined by \eqref{eq:sigfreebef} for $h^n_t=(x^n_{1:t},c_{1:t},a_{1:t-1},y_{1:t-1})$ is constructed by actions $A_{1:t-1}=a_{1:t-1}$ conditioned on the history of observations $y_{1:t-1},x^n_{1:t}$ using the system dynamics. In forming a signaling-free belief no underlying strategy profile is assumed, and we do not make any further inference by tracing back how the observed actions are generated (i.e. the observed actions are generated by an open loop strategy). Therefore, if a history of states $x_{1:t}$ does not belong to the support of the signaling-free belief (i.e. $\hat \mu^n_t(h^n_t)(x_{1:t})=0$), this history of states $x_{1:t}$ can not happen under any possible strategy profile. A rational agent should not assign positive probability on any history of states that is outside the support of the signaling-free belief. This leads to the second part of the consistency requirement \eqref{eq:consist_sigfree}. 
With this additional requirement, the definition of consistency in our state space model is the analogue of the consistency in the extensive game form, and the definitions of PBE in the two models become identical.

We note that the signaling-free beliefs are strategy-independent. In systems where any agent's belief on system's states is strategy-independent (e.g. the finite games considered in \cite{nayyar2014Common} and linear-Gaussian systems \cite{gupta2014common}), one can show that for any strategy profile $g$, the only consistent belief system 
is the signaling-free belief system $\hat \mu$. In this type of systems, consistency is trivially satisfied using the signaling-free belief system $\hat \mu$. As a result, it is sufficient to verify sequentially rationality to establish a PBE for systems with strategy-independent beliefs.

\section{Common Information Based Perfect Bayesian Equilibria and Sequential Decomposition}
\label{sec:CIBequilibria}
In this section, we introduce the common information based (\CI) belief system and \CI\ beliefs. The \CI\ beliefs generally depend on the agents' strategies because of the presence of signaling in dynamic games with asymmetric information. 
We use \CI\ beliefs to construct \CI\ strategy profiles for the agents. 
Using the concept of \CI\ belief system and \CI\ strategy profile, we define a subclass of PBE called \textit{common information based perfect Bayesian equilibria} (\CI-PBE).
The main result of this section provides a sequential decomposition for the dynamic game model in Section \ref{sec:model}; this decomposition leads to a backward induction algorithm to compute \CI-PBE.

\subsection{Preliminaries}

Based on common histories, we first define \CI\ signaling-free belief system 
\begin{definition}[\CI\ Signaling-Free Belief System]
The \CI\ signaling-free belief system is $\hat \gamma :=\{\hat \gamma_t:\mathcal{H}^c_t \mapsto \Delta(\mathcal{X}_t),t\in\mathcal{T}\}$ 
where for each $t\in\mathcal{T}$ and
$h^c_t = (c_{1:t},a_{1:t-1},y_{1:t-1})\in \mathcal{H}^c_t $, $\hat \gamma_t(h^c_t)$ is a belief on states $X_t$, with
\begin{align}
\hat \gamma_t(h^c_t)(x_t) \!:=\! \mathbb{P}_{\{A_{1:t-1}=a_{1:t-1}\}}(x_t|y_{1:t-1}) \text{ for } x_t \in \mathcal{X}_t.
\label{eq:sigfreebeft}
\end{align}

\end{definition}

The right hand side of \eqref{eq:sigfreebeft} is interpreted in the same way as the right hand side of \eqref{eq:sigfreebef}.
Note that, $\hat \gamma_t(h^c_t)(x^{-n}_t) = \hat \mu^n_t(h^n_t)(x^{-n}_t)$ from its definition, when $h^n_t = (x^n_{1:t},h^c_t)$ for any $n\in\mathcal{N}$.
We use 
\begin{align}
\hat \Pi_t := \hat\gamma_t(H^c_t)
\end{align}
to denote the \CI\ signaling-free belief at time $t$. Similar to signaling-free beliefs on histories of states defined in \eqref{eq:sigfreebef}, 
the \CI\ signaling-free belief $\hat \Pi_t$ depends only on the system dynamics and the observation model. 

The \CI\ signaling-free beliefs have the following dynamics.
\begin{lemma}[Evolution of \CI\ Signaling-Free Beliefs]
\label{lm:SFupdate}
The \CI\ signaling-free beliefs $\{\hat \Pi_t,t\in\mathcal{T}\}$ can be updated by
\begin{align}
\hat \Pi_{t+1} = \prod_{n=1}^N \hat \Pi^n_{t+1} \text{, where}
\label{eq:SFindep}
\end{align}
\begin{align}
&\hat \Pi^n_{t+1}=\hat\psi^n_t(Y^n_t,A_t,\hat \Pi^n_{t}),
\\
&\hat\psi^n_t(y^n_t,a_t,\hat \pi^n_{t})(x^n_{t+1}) \nonumber\\
:=
&\frac{\sum_{x^n_{t} \in \mathcal{X}^n_t}p^n_t(x^n_{t+1};x^n_t,a_t) q^n_t(y^n_{t};x^n_t,a_t) \hat\pi^{n}_{t}(x^n_{t})}
	{\sum_{x'^n_{t} \in \mathcal{X}^n_t}q^n_t(y^n_{t};x'^n_t,a_t)\hat\pi^{n}_{t}(x'^n_{t})} .
\end{align}
\end{lemma}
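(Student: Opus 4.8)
The plan is to prove both assertions by induction on $t$, exploiting the fact that under the signaling-free (open-loop) measure the $N$ local state–observation processes become mutually independent, so that the joint belief factorizes and each factor evolves as an ordinary hidden-Markov-model filter. Concretely, I would first introduce the per-agent signaling-free belief
\[
\hat\Pi^n_t(x^n_t) := \mathbb{P}_{\{A_{1:t-1}=a_{1:t-1}\}}(x^n_t \mid y^n_{1:t-1}),
\]
evaluated at $h^c_t=(c_{1:t},a_{1:t-1},y_{1:t-1})$, and then show that this marginal satisfies both the product form \eqref{eq:SFindep} and the claimed recursion.

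First I would establish the factorization $\hat\Pi_t = \prod_{n=1}^N \hat\Pi^n_t$. Fix the action sequence $a_{1:t-1}$. Under $\mathbb{P}_{\{A_{1:t-1}=a_{1:t-1}\}}$ the actions are exogenous inputs rather than strategy-generated random variables, so for each $n$ the block $(X^n_{1:t},Y^n_{1:t-1})$ is, by \eqref{eq:Sdynamics} and \eqref{eq:Ydynamics}, a deterministic function of the primitive variables $(X^n_1, W^{n,X}_{1:t-1}, W^{n,Y}_{1:t-1})$ and the fixed inputs $a_{1:t-1}$. Since the primitive variables are mutually independent across $n$ by assumption, the $N$ blocks $\{(X^n_{1:t},Y^n_{1:t-1})\}_{n\in\mathcal{N}}$ are mutually independent. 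Conditioning on $y_{1:t-1}=(y^1_{1:t-1},\dots,y^N_{1:t-1})$ therefore decouples across agents, giving $\mathbb{P}_{\{A_{1:t-1}=a_{1:t-1}\}}(x_t\mid y_{1:t-1})=\prod_{n} \mathbb{P}_{\{A_{1:t-1}=a_{1:t-1}\}}(x^n_t\mid y^n_{1:t-1})$, which is exactly $\hat\Pi_t = \prod_{n} \hat\Pi^n_t$. The base case $t=1$ holds because $\hat\Pi_1(x_1)=\mathbb{P}(x_1)=\prod_n \mathbb{P}(x^n_1)$ by independence of the initial local states.

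Next I would derive the per-agent recursion, working entirely within block $n$ (so the relevant past reduces to $y^n_{1:t}$) and splitting the update into a Bayes correction followed by a prediction step. For the correction step, $Y^n_t = h^n_t(X^n_t,a_t,W^{n,Y}_t)$ depends on the past only through $X^n_t$ given $a_t$, so Bayes' rule yields $\mathbb{P}_{\{A_{1:t}=a_{1:t}\}}(x^n_t\mid y^n_{1:t}) \propto q^n_t(y^n_t;x^n_t,a_t)\,\hat\Pi^n_t(x^n_t)$, with normalizer $\sum_{x'^n_t} q^n_t(y^n_t;x'^n_t,a_t)\hat\Pi^n_t(x'^n_t)$. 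For the prediction step, $X^n_{t+1}=f^n_t(X^n_t,a_t,W^{n,X}_t)$ with $W^{n,X}_t$ independent of $W^{n,Y}_t$, so $X^n_{t+1}$ is conditionally independent of $Y^n_t$ given $(X^n_t,a_t)$ and its transition kernel is $p^n_t(x^n_{t+1};x^n_t,a_t)$. Multiplying the corrected belief by the transition kernel and summing over $x^n_t$ reproduces exactly $\hat\psi^n_t(y^n_t,a_t,\hat\Pi^n_t)(x^n_{t+1})$, establishing $\hat\Pi^n_{t+1}=\hat\psi^n_t(Y^n_t,A_t,\hat\Pi^n_t)$.

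The main obstacle — the one place that demands care rather than mechanical computation — is the decoupling step: the factorization holds precisely because the signaling-free measure treats $a_{1:t-1}$ as fixed exogenous inputs, so one agent's private state never enters another agent's observation through strategy-dependent actions. Were the actions generated by strategies depending on private states, the blocks would no longer be independent and the product form would fail; this is exactly the distinction between signaling-free beliefs and general beliefs emphasized after \eqref{eq:sigfreebef}. The remaining two-step filter is the standard nonlinear-filtering recursion for a hidden Markov model with a known input sequence, and requires only the routine check that the conditioning events and normalizers are positive, which holds on the support of $\hat\Pi^n_t$ and for feasible observations $y^n_t$.
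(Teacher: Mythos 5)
Your proof is correct, but it reaches the result by a different route than the paper. The paper runs a single induction on the \emph{joint} belief: it applies Bayes' rule to $\hat\pi_{t+1}(x_{t+1})$, expands the numerator $\mathbb{P}_{(A_{1:t}=a_{1:t})}(x_{t+1},y_t\mid y_{1:t-1})$ by summing over $x_t$, and uses the induction hypothesis (the factorized form of $\hat\pi_t$) together with the product structure of the kernels $p^n_t$ and $q^n_t$ to show that the sum over the joint $x_t$ splits into a product of per-agent sums; the factorization and the recursion then drop out of the same ratio simultaneously. You instead first establish the factorization structurally: under the open-loop measure the blocks $(X^n_{1:t},Y^n_{1:t-1})$ are deterministic functions of disjoint sets of mutually independent primitives and the fixed inputs $a_{1:t-1}$, hence mutually independent, so the joint conditional decouples and $\hat\Pi^n_t$ depends only on $y^n_{1:t-1}$; the recursion then reduces to a scalar correction--prediction filter within each block. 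Your version is more conceptual --- it isolates \emph{why} the product form holds (exogenous actions sever all cross-agent coupling) and makes explicit the fact, only implicit in the paper, that each marginal depends on agent $n$'s own observations alone --- at the cost of having to justify the block-independence and the conditional-independence steps ($Y^n_t\perp Y^n_{1:t-1}\mid X^n_t$ and $X^n_{t+1}\perp Y^n_t\mid X^n_t$) separately, which you do correctly. The paper's computation is more mechanical but self-contained, delivering both claims in one pass. Both arguments are sound; your "induction" framing is slightly loose, since your block-independence argument actually yields the factorization directly at every $t$ without appealing to the time-$t$ case.
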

\begin{proof}
See Appendix \ref{app:CIB}.
\end{proof}


Similar to the belief system defined in Section \ref{sec:model}, we need a belief system to form an equilibrium.
We define \CI\ belief systems based on the agents' common histories together with \CI\ update rules. 
\begin{definition}[\CI\ Belief System and \CI\ Update Rule]
A collection of maps $\gamma :=\{\gamma_t:\mathcal{H}^c_t \mapsto \Delta(\mathcal{X}_t), t\in \mathcal{T}\}$ is called a \CI\ belief system.
A set of belief update functions $\psi = \{\psi_t^{n}: \mathcal{Y}^n_t \times \mathcal{A}_t\times \mathcal{C}_t\times\Delta(\mathcal{X}_t) \times\Delta(\mathcal{X}_t)   \mapsto \Delta(\mathcal{X}^n_t), n\in\mathcal{N}, t\in\mathcal{T}  \}$
is called a \CI\ update rule.

\end{definition}

From any \CI\ update rule $\psi$, we can construct a \CI\ belief system $\gamma_{\psi}$ by the following inductive construction:
\begin{enumerate}
\item $\gamma_{\psi,1}(h^c_{1})(x_1) :=  \mathbb{P}(x_1) = \prod_{n\in\mathcal{N}} \mathbb{P}(x^n_1) \quad \forall x_1\in \mathcal{X}_1$.
\item At time $t+1$, after $\gamma_{\psi,t}(h^c_t)$  is defined, set
\begin{align}
&\gamma_{\psi,t+1}(h^c_{t+1})(x^n_{t+1}) \nonumber\\
:= &\psi^{n}_{t}(y^n_t,a_t,c_t,\gamma_{\psi,t}(h^c_{t}),\hat \gamma_t(h^c_t))(x^n_{t+1}),
\label{eq:psiB}
\end{align}
\begin{align}
\gamma_{\psi,t+1}(h^c_{t+1})(x_{t+1}) := &\prod_{n=1}^N \gamma_{\psi,t+1}(h^c_{t+1})(x^n_{t+1}),
\end{align}
for every history $h^{c}_{t+1} = (h^{c}_{t},c_{t+1},a_t,y_t)\in\mathcal{H}^{c}_{t+1}$ and for all $x_{t+1}\in \mathcal{X}_{t+1}$.
\end{enumerate}
For a \CI\ belief system $\gamma_{\psi}$, we use $\Pi^{\gamma_\psi}_t$ to denote the belief, under $\gamma_\psi$, on $X_t$ conditional on $H^c_t$; that is,
\begin{align}
&\Pi^{\gamma_\psi}_t:= \gamma_{\psi,t}(H^c_t ) \in \Delta(\mathcal{X}_t).
\end{align}
We also define the marginal beliefs on $X^n_t$ at time $t$ as
\begin{align}
&\Pi^{n,\gamma_\psi}_t(x^n_t) := \gamma_{\psi,t}(H^c_t )(x^n_t) \quad \forall x^n_t \in \mathcal{X}^n_t.
\end{align}

Since the \CI\ beliefs $\{\Pi^{\gamma_\psi}_t,t\in\mathcal{T}\}$ are common information to all agents, all agents can use $\Pi^{\gamma_\psi}_t$ to evaluate the performance of their strategies. Furthermore, if a \CI\ update rule $\psi$ is properly chosen, the \CI\ signaling-free belief $\hat\Pi_t$ and the \CI\ belief $\Pi^{\gamma_\psi}_t$ together can summarize the agents' common knowledge about the current system states $X_t$ from all previous actions $A_{1:t-1}$ and observations $Y_{1:t-1}$ available to all of them at time $t$. 
This motivate the concept of \CI\ strategies defined below.

\begin{definition}[\CI\ Strategy Profile]
We call a set of functions $\lambda = \{\lambda_t^{n}: \mathcal{X}^n_t \times\mathcal{C}_t\times\Delta(\mathcal{X}_t)\times\Delta(\mathcal{X}_t) \mapsto \Delta(\mathcal{A}^n_t), n\in\mathcal{N}, t\in\mathcal{T}  \}$ 
a \CI\ strategy profile. 
\end{definition}

For notational simplicity, let $\mathcal{B}_t:= \mathcal{C}_t\times \Delta(\mathcal{X}_t)\times\Delta(\mathcal{X}_t)$ and 
\begin{align}
b_t=(c_t, \pi_t,\hat\pi_t) \in \mathcal{B}_t
\end{align}
denote the realization of the part of common information used in a \CI\ strategy.

If agent $n$ uses a \CI\ strategy $\lambda_t^{n}$, then any action $a^n_t\in\mathcal{A}^n_t$ is taken by agent $n$ at time $t$ with probability $\lambda_t^{n}(x^n_t,b_t)(a^n_t)$ when $X^n_t = x^n_t \in \mathcal{X}^n_t$ $(C_t, \Pi^{\gamma_{\psi}}_t,\hat\Pi^{\gamma_{\psi}}_t) = b_t \in \mathcal{B}_t$.
Note that the domain $\mathcal{X}^n_t\times\mathcal{B}_t$ of a \CI\ strategy $\lambda^n_t$ is different from the domain $\mathcal{H}^n_t$ of a behavioral strategy $g^n_t$. However, given a \CI\ strategy profile $\lambda$ and a \CI\ update rule $\psi$, we can construct a behavioral strategy profile $g\in\mathcal{G}$ by
\begin{align}
&g_t^{n}(h^{n}_t) := \lambda_t^{n}(x^n_t,c_t,\gamma_{\psi,t}(h^c_t),\hat \gamma_t(h^c_t)).
\label{eq:getaS}
\end{align}

In the following we provide a definition of a \CI\ belief system consistent with a \CI\ strategy profile.


\begin{definition}[Consistency]
\label{def:CIB_consistency}
For a given \CI\ strategy $\lambda^n_t$ of user $n\in\mathcal{N}$ at $t\in\mathcal{T}$, we call a belief update function $\psi^n_t$ consistent with $\lambda^n_t$ if \eqref{eq:tilpsiS} below is satisfied when the denominator of \eqref{eq:tilpsiS} is non-zero;
\begin{align}
&\psi^{n}_t(y^n_t,a_t,b_t)(x^n_{t+1}) \nonumber\\
=&\frac{\sum_{x^n_{t} \in \mathcal{X}^n_t}p^n_t(x^n_{t+1};x^n_t,a_t) \eta^n_t(x^n_{t},y^n_t,a_t,b_t)\pi^n_t(x^n_t)}
	{\sum_{x'^n_{t} \in \mathcal{X}^n_t}\eta^n_t(x'^n_{t},y^n_t,a_t,b_t)\pi^n_t(x'^n_t)} ,
\label{eq:tilpsiS} 
\end{align}
where
\begin{align}
\eta^n_t(x^n_{t},y^n_t,a_t,b_t)\!:=\! q^n_t(y^n_{t};x^n_t,a_t)\lambda^{n}_t(x^n_{t},b_t)(a^n_t).
\end{align}
When the denominator of \eqref{eq:tilpsiS} is zero, 
\begin{align}
\psi^{n}_t(b_t,a_t,y^n_t)(x^n_{t+1}) = 0 
\text{ if } \hat\psi^{n}_t(\hat\pi_{t},a_t,y^n_t)(x^n_{t+1}) = 0.
\label{eq:tilpsiSD0} 
\end{align}
For any $t\in\mathcal{T}$, if $\psi^n_t$ is consistent with $\lambda^n_t$ for all $n\in\mathcal{N}$, we call $\psi_t$ consistent with $\lambda_t$.
If $\psi_t$ is consistent with $\lambda_t$ for all $t\in\mathcal{T}$, we call the \CI\ update rule $\psi=(\psi_1,\dots,\psi_T)$ consistent with the \CI\ strategy profile $\lambda=(\lambda_1,\dots,\lambda_T)$.

\end{definition}

\begin{remark}
Note that when the denominator of \eqref{eq:tilpsiS} is zero, $\psi^{n}_t(b_t,a_t,y^n_t)$ can be arbitrarily defined as a probability distribution in $\Delta(\mathcal{X}_{t+1})$ satisfying \eqref{eq:tilpsiSD0}  and consistency still holds.
One simple choice is to set $\psi^{n}_t(y^n_t,a_t,b_t) =\hat\psi^{n}_t(y^n_t,a_t,\hat\pi_{t})$ when the denominator of \eqref{eq:tilpsiS} is zero; this choice trivially satisfies \eqref{eq:tilpsiSD0}. 
Thus, for any \CI\ strategy profile $\lambda$, there always exists at least a \CI\ update rule that is consistent with $\lambda$.


\end{remark}

The following lemma establishes the relation between the consistency conditions given by Definition \ref{def:consistency} and  \ref{def:CIB_consistency}.
\begin{lemma}
\label{lm:befupdate}

If  $\lambda$ is a \CI\ strategy profile along with its consistent \CI\ update rule $\psi$,
there exists a pair, denoted by $(g,\mu)=f(\lambda,\psi)$, such that $g$ is the strategy profile constructed by \eqref{eq:getaS} from $(\lambda,\psi)$, and $\mu$ is a belief system consistent with the strategy profile $g$. 
Furthermore, for all $h^{n}_t\in\mathcal{H}^n_t$, $x_{1:t} \in \mathcal{X}_{1:t}$
\begin{align}
&\mu^n_{t}(h^n_{t})(x_{1:t}) = 
\mathbf{1}_{\{x^n_{1:t}\}}(h^n_{t}) \prod_{k\neq n}\mu^c_{t}(h^c_{t})(x^k_{1:t}) 
\label{eq:mut}
\end{align}
where $\mu^c_{t}: \mathcal{H}^c_t \mapsto \Delta(\mathcal{X}_{1:t})$ 
satisfies the relation 
\begin{align}
& \mu^c_{t}(h^c_t)(x^k_t) = \sum_{\mathclap{x^{-k}_{1:t}\in \mathcal{X}^{-k}_{1:t}, x^{k}_{1:t-1}\in \mathcal{X}^{k}_{1:t-1} }}
\mu^c_{t}(h^c_t)(x_{1:t})
= \gamma_{\psi,t}(h^c_t)(x^{k}_t)
\label{eq:muct}
\end{align}
for all $h^{c}_t\in\mathcal{H}^c_t$, $k\in \mathcal{N}$ and $x^k_t \in \mathcal{X}^k_t$.

%

\end{lemma}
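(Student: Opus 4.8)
The plan is to exhibit the map $f$ explicitly and verify both conclusions by a single induction on $t$. I would take $g$ to be the behavioral profile produced by \eqref{eq:getaS}, and build $\mu$ from a family of per-agent common history-beliefs $\mu^{c,n}_t:\mathcal{H}^c_t\mapsto\Delta(\mathcal{X}^n_{1:t})$, initialized by $\mu^{c,n}_1(h^c_1)(x^n_1):=\mathbb{P}(x^n_1)$ and, whenever the denominator of \eqref{eq:tilpsiS} is nonzero, updated by
\begin{align}
\mu^{c,n}_{t+1}(h^c_{t+1})(x^n_{1:t+1}):=\frac{\mu^{c,n}_{t}(h^c_{t})(x^n_{1:t})\,p^n_t(x^n_{t+1};x^n_t,a_t)\,\eta^n_t(x^n_t,y^n_t,a_t,b_t)}{\sum_{\tilde x^n_{t}\in\mathcal{X}^n_t}\eta^n_t(\tilde x^n_t,y^n_t,a_t,b_t)\,\pi^n_t(\tilde x^n_t)}.
\label{eq:mucnupdate}
\end{align}
Setting $\mu^c_t(h^c_t)(x_{1:t}):=\prod_{k}\mu^{c,k}_t(h^c_t)(x^k_{1:t})$ and $\mu^n_t(h^n_t)(x_{1:t}):=\mathbf{1}_{\{x^n_{1:t}\}}(h^n_t)\prod_{k\neq n}\mu^{c,k}_t(h^c_t)(x^k_{1:t})$ gives exactly the asserted form \eqref{eq:mut}, so the induction only has to establish consistency in the sense of Definition \ref{def:consistency} together with the marginal identity \eqref{eq:muct}.

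For the inductive step on the positive-probability branch I would substitute the inductive form of $\mu^n_t$ into \eqref{eq:conditionalupdate} and use that, by \eqref{eq:getaS}, every $g^k_t(x^k_{1:t},h^c_t)$ equals $\lambda^k_t(x^k_t,b_t)$ and thus factors across agents. In the ratio \eqref{eq:consist_positive}, agent $n$'s factor is $\mathbf{1}_{\{x^n_{1:t}\}}(h^n_t)\,p^n_t q^n_t$ over the same $p^n_t q^n_t$, so it collapses to $\mathbf{1}_{\{x^n_{1:t}\}}(h^n_t)$, which together with the explicit $\mathbf{1}_{\{x^n_{t+1}\}}(h^n_{t+1})$ of \eqref{eq:consist_positive} gives $\mathbf{1}_{\{x^n_{1:t+1}\}}(h^n_{t+1})$. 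For each $j\neq n$, summing the denominator over $x^j_{1:t-1}$ and $x^j_{t+1}$ uses $\sum_{x^j_{t+1}}p^j_t=1$ and, via the inductive identity \eqref{eq:muct}, replaces the marginal of $\mu^{c,j}_t$ by $\pi^j_t(x^j_t)=\gamma_{\psi,t}(h^c_t)(x^j_t)$, producing exactly the denominator of \eqref{eq:tilpsiS}. Hence the agent-$j$ factor of the ratio equals \eqref{eq:mucnupdate}, so \eqref{eq:consist_positive} holds in the claimed product form; marginalizing \eqref{eq:mucnupdate} over $x^j_{1:t}$ reproduces \eqref{eq:tilpsiS}, which by \eqref{eq:psiB} is $\gamma_{\psi,t+1}(h^c_{t+1})(x^j_{t+1})$, closing \eqref{eq:muct}.

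The remaining and, I expect, hardest part is the zero-probability branch, where $\mathbb{P}^{g_t}_\mu(y_t,a_t|h^n_t,a^n_t)=0$ and consistency reduces to the support condition \eqref{eq:consist_sigfree} against the signaling-free belief $\hat\mu^n_{t+1}$. Since the primitives are independent and the dynamics \eqref{eq:Sdynamics}--\eqref{eq:Ydynamics} decouple across agents given the actions, $\hat\mu^n_{t+1}(h^n_{t+1})$ factorizes over agents into per-agent signaling-free history beliefs (the history-level analogue of the product decomposition in Lemma \ref{lm:SFupdate}), whose current-state marginals are the $\hat\psi^k_t$ appearing in \eqref{eq:tilpsiSD0}. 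The consistency of $\psi$ then supplies, through \eqref{eq:tilpsiSD0}, the matching support constraint at the level of the current state $x^k_{t+1}$; the obstacle is to \emph{lift} this to the history level. I would do so by setting, for an agent $k$ with vanishing denominator, $\mu^{c,k}_{t+1}(h^c_{t+1})(x^k_{1:t+1}):=\psi^k_t(y^k_t,a_t,b_t)(x^k_{t+1})\,\hat\mu^{c,k}_{t+1}(h^c_{t+1})(x^k_{1:t}\mid x^k_{t+1})$, combining the $\psi$-dictated current-state marginal with the signaling-free conditional of the past given the present, so the marginal identity \eqref{eq:muct} is preserved while the support stays inside that of $\hat\mu^{c,k}_{t+1}$. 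Because a vanishing joint probability can arise from only some agents' factors being zero, the two branches must be applied coordinate-wise in $k$ and recombined through the product $\prod_{k\neq n}$; checking that this recombination respects both \eqref{eq:muct} and \eqref{eq:consist_sigfree} simultaneously is the crux of the argument.
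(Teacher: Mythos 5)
Your construction is correct and follows essentially the same route as the paper's proof: the same inductive, per-agent product-form definition of $\mu^c_{t}$ with the identical Bayes update on the positive branch, applied coordinatewise in $k$. The only difference is the off-path assignment --- you spread the $\psi$-dictated marginal $\gamma_{\psi,t+1}(h^c_{t+1})(x^k_{t+1})$ over past trajectories via the signaling-free conditional of the past given the present, whereas the paper spreads it uniformly over the signaling-free support --- and either choice preserves \eqref{eq:muct} and \eqref{eq:consist_sigfree}, so the recombination you flag as the crux goes through exactly as in the paper.
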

\begin{proof}
See Appendix \ref{app:CIB}.
\end{proof}

Lemma \ref{lm:befupdate} implies that using a \CI\ strategy profile $\lambda$ along with its consistent update rule $\psi$ we can construct a behavioral strategy profile $g$ along with its consistent belief system $\mu$.

Note that, equation \eqref{eq:mut} in Lemma \ref{lm:befupdate} implies that for any agent $n$, his local states $X^n_{1:t}$ are independent of $X^{-n}_{1:t}$ under $\mu$ conditional on any history $h^{n}_t\in\mathcal{H}^{n}_t$. 
Furthermore, the conditional independence described by \eqref{eq:mut} still holds even when agent $n$ uses another strategy since the right hand side of \eqref{eq:mut} depends only on the \CI\ update rule $\psi$. This fact is made precise in the following lemma.
\begin{lemma}[Conditional Independence]
\label{lm:condindep}
Suppose $\lambda$ is a \CI\ strategy profile and $\psi$ is a \CI\ update rule consistent with $\lambda$. 
Let $(g,\mu)=f(\lambda,\psi)$. If every agent $k\neq n$ uses the strategy $g^{k}$ along with the belief system $\mu$, then under any policy $g'^n$ of agent $n$, agent $n$'s belief about the states $X_{1:t}$ for $h^n_t\in\mathcal{H}^n_t$ is given by
\begin{align}
&\mathbb{P}^{g'^n,g^{-n}}(x_{1:t}|h^n_t) = \mu^n_t(h^{n}_t)(x_{1:t})\nonumber\\
=&\mathbf{1}_{\{x^n_{1:t}\}}(h^n_{t}) \prod_{k\neq n}\mu^c_{t}(h^c_{t})(x^k_{1:t}) \text{ for all } x_{1:t} \in \mathcal{X}_{1:t}.
\end{align}
\end{lemma}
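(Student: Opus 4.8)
The plan is to prove the first equality, $\mathbb{P}^{g'^n,g^{-n}}(x_{1:t}|h^n_t) = \mu^n_t(h^n_t)(x_{1:t})$, by forward induction on $t$; the second equality is then immediate from \eqref{eq:mut} in Lemma \ref{lm:befupdate}. The conceptual content is that the belief $\mu^n_t$, which by Lemma \ref{lm:befupdate} was built to be consistent with the \emph{non-deviated} profile $g$, remains agent $n$'s true posterior even when he unilaterally switches to an arbitrary $g'^n$; in other words, agent $n$'s posterior about the rest of the system is invariant to his own strategy. I would prove exactly this invariance.

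For the base case $t=1$ we have $h^n_1=(x^n_1,c_1)$, and since the primitive initial states $X^1_1,\dots,X^N_1$ (and $C_1$) are mutually independent, conditioning on $x^n_1$ and $c_1$ leaves the other agents' local states at their priors $\prod_{k\neq n}\mathbb{P}(x^k_1)$, which matches $\mu^n_1$ from the inductive construction of $\gamma_\psi$ in Lemma \ref{lm:befupdate}. For the inductive step write $h^n_{t+1}=(h^n_t,x^n_{t+1},c_{t+1},a_t,y_t)$ and expand $\mathbb{P}^{g'^n,g^{-n}}(x_{1:t+1}|h^n_{t+1})$ by Bayes' rule. The joint law of the newly revealed variables $(x_{t+1},a_t,y_t,c_{t+1})$ given $(x_{1:t},h^n_t)$ factors into: agent $n$'s action probability $g'^n_t(h^n_t)(a^n_t)$; the other agents' \CI\ action probabilities $\prod_{k\neq n}\lambda^k_t(x^k_t,b_t)(a^k_t)$; the transition and observation kernels $\prod_k p^k_t(x^k_{t+1};x^k_t,a_t)q^k_t(y^k_t;x^k_t,a_t)$; and a public-state kernel depending only on $(c_t,a_t,c_{t+1})$. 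Multiplying by the time-$t$ posterior, which equals $\mu^n_t(h^n_t)(x_{1:t})$ by the inductive hypothesis, yields the numerator of the update.

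The crux is that the factor $g'^n_t(h^n_t)(a^n_t)$ and the public-state kernel are both independent of the unknown states $x^{-n}$, so they pull out of the normalizing sum over $x^{-n}_{1:t+1}$ and cancel; agent $n$'s own kernels $p^n_t,q^n_t$ and the matching indicator depend only on the fixed $x^n$-trajectory and cancel likewise, leaving precisely $\mathbf{1}_{\{x^n_{1:t+1}\}}(h^n_{t+1})$. What survives is exactly the Bayes update \eqref{eq:conditionalupdate}--\eqref{eq:consist_positive} that defines $\mu^n_{t+1}(h^n_{t+1})$, in which agent $n$'s own action is conditioned on and its probability factor is therefore also absent. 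Since $\mu$ is consistent with $g$ by Lemma \ref{lm:befupdate}, this update returns $\mu^n_{t+1}(h^n_{t+1})$, giving $\mathbb{P}^{g'^n,g^{-n}}(x_{1:t+1}|h^n_{t+1})=\mu^n_{t+1}(h^n_{t+1})$ and closing the induction; the product form over $k\neq n$ is manifestly preserved throughout.

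The step I expect to be the real obstacle is the measure-zero case, not the algebra of the positive case. When the realized $(a_t,y_t)$ has zero probability under the belief measure \eqref{eq:conditionalupdate}, the denominator vanishes, \eqref{eq:consist_positive} no longer applies, and $\mu^n_{t+1}$ is pinned down only by the signaling-free condition \eqref{eq:consist_sigfree}. I would argue that, by the inductive hypothesis, the belief measure \eqref{eq:conditionalupdate} and the true measure $\mathbb{P}^{g'^n,g^{-n}}(\cdot|h^n_t,a^n_t)$ assign the same probability to $(a_t,y_t)$ — they share the same time-$t$ posterior, the same $g^{-n}$, and the same dynamics, differing only in agent $n$'s already-conditioned action — so the two families of null events coincide. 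On those events the conditional probability $\mathbb{P}^{g'^n,g^{-n}}(\cdot|h^n_{t+1})$ is defined only up to a null set, and the asserted equality is understood on the histories actually reachable under $(g'^n,g^{-n})$. Establishing this alignment of supports between the true and belief-generated measures is where the careful work lies.
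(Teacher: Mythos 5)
Your proof is correct and follows essentially the same route as the paper: the paper packages the argument by rebuilding the belief system for the deviated profile $(g'^n,g^{-n})$ via the construction of Lemma~\ref{lm:befupdate} and observing that the common part $\mu^c_t$ depends only on $g^{-n}$, which is precisely the cancellation of the factor $g'^n_t(h^n_t)(a^n_t)$ that you carry out explicitly in your induction. Your handling of the measure-zero case (null events of the true and belief-generated measures coincide on reachable histories, so the Bayes update always applies there) matches the paper's restriction to $\mathbb{P}^{g'^n,g^{-n}}(h^n_t)>0$.
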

\begin{proof}
See Appendix \ref{app:CIB}.
\end{proof}
According to Lemma \ref{lm:condindep}, for any $(g,\mu)=f(\lambda,\psi)$ generated from Lemma \ref{lm:befupdate}, we use 
\begin{align}
\mathbb{P}_{\mu}(x^{-n}_{1:t}|h^c_t):= \mu^n_t(h^{n}_t)(x^{-n}_{1:t})= \mu^c_t(h^{c}_t)(x^{-n}_{1:t})
\end{align}
to indicate that $\mu^n_t(h^{n}_t)(x^{-n}_{1:t})$ depends only on $h^c_t$ and $\mu$.

\subsection{Common Information Based Perfect Bayesian Equilibria}
\label{sub:common}

Based on the concept of \CI\ beliefs and \CI\ strategies, we focus on \CI-PBE defined below.
\begin{definition}[\CI-PBE]
A pair $(\lambda^*,\psi^*)$ of a \CI\ strategy profile $\lambda^*$ and a \CI\ update rule $\psi^*$ is called a Common Information Based Perfect Bayesian Equilibrium (\CI-PBE) if 
$\psi^*$ is consistent with $\lambda^*$ and the pair $(g^*, \mu^*)= f(\lambda^*,\psi^*)$ defined in Lemma \ref{lm:befupdate} forms a PBE.
\end{definition}

The following lemma plays a crucial role in establishing the main results of this paper.

\begin{lemma}[Closeness of \CI\ Strategies]
\label{lm:Commonclose}
Suppose $\lambda$ is a \CI\ strategy profile and $\psi$ is a \CI\ update rule consistent with $\lambda$. If every agent $k\neq n$ uses the \CI\ strategy $\lambda^{k}$ along with the belief generated by $\psi$, then, there exists a \CI\ strategy $\lambda'^{n}$ that is a best response for agent $n$ under the belief generated by $\psi$ at every history $h^n_t\in\mathcal{H}^n_t$ for all $t\in\mathcal{T}$. 
\end{lemma}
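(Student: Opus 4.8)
The plan is to show that, once all agents $k\neq n$ play the CIB strategies $\lambda^{k}$ and the belief is generated by the consistent update rule $\psi$, agent $n$'s best-response problem collapses to a finite-horizon Markov decision process (MDP) whose state is the pair $(X^n_t,B_t)$ with $B_t=(C_t,\Pi^{\gamma_\psi}_t,\hat\Pi_t)\in\mathcal{B}_t$. Since a CIB strategy $\lambda'^{n}_t$ is, by definition, a map on $\mathcal{X}^n_t\times\mathcal{B}_t$, producing an optimal Markov policy for this MDP yields the desired best response inside the CIB class. First I would invoke Lemma \ref{lm:befupdate} to pass from $(\lambda,\psi)$ to the behavioral pair $(g,\mu)=f(\lambda,\psi)$, so that agent $n$'s objective becomes exactly the sequential-rationality functional \eqref{eq:seqrational} evaluated against $(g^{-n},\mu)$.

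The core of the argument is to establish that $(X^n_t,B_t)$ is a controlled Markov chain from agent $n$'s viewpoint, with $A^n_t$ the control. Three facts combine. (i) By the conditional-independence result of Lemma \ref{lm:condindep}, agent $n$'s belief about $X^{-n}_{1:t}$ equals the marginal $\mu^c_t(h^c_t)(x^{-n}_{1:t})$, which depends on $h^n_t$ only through $h^c_t$, hence only through $B_t$; crucially this holds under \emph{any} deviation $g'^{n}$, so agent $n$ cannot manipulate his beliefs about the others through his own choices. (ii) Each opponent action $A^k_t$, $k\neq n$, is drawn from $\lambda^{k}_t(X^k_t,B_t)$, so after averaging $X^{-n}_t$ against the belief in (i) the conditional law of $A^{-n}_t$ is a function of $(B_t,a^n_t)$ alone. (iii) Given $(X_t,A_t)$, the public state $C_{t+1}$ evolves by \eqref{eq:Cdynamics} and the private states and observations by \eqref{eq:Sdynamics}--\eqref{eq:Ydynamics} with independent noises, while $\Pi^{\gamma_\psi}_{t+1}$ and $\hat\Pi_{t+1}$ are deterministic functions of $(A_t,Y_t,B_t)$ via \eqref{eq:psiB} and Lemma \ref{lm:SFupdate}. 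Chaining (i)--(iii) and integrating out $X^{-n}_t,A^{-n}_t,Y_t$ shows that the conditional distribution of $(X^n_{t+1},B_{t+1})$ given $(h^n_t,a^n_t)$ depends on $h^n_t$ only through $(x^n_t,b_t)$; the same averaging shows that the expected stage utility $\mathbb{E}[\phi^n_t(C_t,X_t,A_t)\mid h^n_t,a^n_t]$ is a function $\tilde\phi^n_t(x^n_t,b_t,a^n_t)$.

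With this MDP structure I would run backward induction: set $V^n_{T+1}\equiv 0$ and, for $t=T,\dots,1$,
\begin{align}
V^n_t(x^n_t,b_t)=\max_{a^n_t\in\mathcal{A}^n_t}\Big[\tilde\phi^n_t(x^n_t,b_t,a^n_t)+\mathbb{E}\big[V^n_{t+1}(X^n_{t+1},B_{t+1})\mid x^n_t,b_t,a^n_t\big]\Big].
\end{align}
Because $\mathcal{A}^n_t$ is finite the maximum is attained; selecting any maximizer (or any distribution supported on the argmax set) defines a CIB strategy $\lambda'^{n}_t:\mathcal{X}^n_t\times\mathcal{B}_t\mapsto\Delta(\mathcal{A}^n_t)$. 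A standard verification argument then shows that for every $t$ and every $h^n_t$ the quantity $V^n_t(x^n_t,b_t)$ equals the optimal value of \eqref{eq:seqrational} and that $\lambda'^{n}_{t:T}$ attains it: an arbitrary behavioral strategy $g'^{n}\in\mathcal{G}^n$ is compared step by step against the Bellman recursion, and the sufficiency property of the previous paragraph guarantees that conditioning on the full history $h^n_t$ yields no advantage over conditioning on $(x^n_t,b_t)$.

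The step I expect to be the main obstacle is (iii) together with the measure-zero bookkeeping. One must verify that $B_{t+1}$ is genuinely a deterministic function of $(B_t,A_t,Y_t)$ and $C_{t+1}$ along the realized trajectory, which requires the consistent update $\psi$ in \eqref{eq:tilpsiS} to be well defined there, and that the factorization of Lemma \ref{lm:condindep} is preserved \emph{exactly} under agent $n$'s deviations, including on histories where the Bayes denominator in \eqref{eq:tilpsiS} vanishes and the signaling-free fallback \eqref{eq:tilpsiSD0} is invoked. Ensuring that $(X^n_t,B_t)$ remains a bona fide controlled Markov state for \emph{all} $h^n_t$, not merely those reached on the equilibrium path, is where the delicate part of the argument lies.
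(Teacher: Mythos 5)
Your proposal is correct and follows essentially the same route as the paper's proof: passing to $(g,\mu)=f(\lambda,\psi)$ via Lemma \ref{lm:befupdate}, using the conditional independence of Lemma \ref{lm:condindep} (which the paper likewise emphasizes holds under arbitrary deviations $g'^n$) to show the expected stage utility and the transition kernel of $(X^n_t,B_t)$ depend on $h^n_t$ only through $(x^n_t,b_t)$, and then invoking finite-horizon MDP theory to obtain an optimal policy of the form $\lambda'^n_t(x^n_t,b_t)$. The only cosmetic difference is that you write out the Bellman recursion explicitly where the paper cites the standard MDP result.
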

\begin{proof}
See Appendix \ref{app:CIB}.
\end{proof}

Lemma \ref{lm:Commonclose} says that the set of \CI\ strategies is closed under the best response mapping. 
Since sequential rationality (Definition \ref{def:seqrational}) requires a strategy profile to be a fixed point under the best response mapping (see \eqref{eq:seqrational}), 
Lemma \ref{lm:Commonclose} allows us to restrict attention to the set of \CI\ strategies to find a fixed point and to search for \CI-PBE. 

%
%

Below we provide a sequential decomposition of the dynamic game of Section \ref{sec:model} that enables us to sequentially compute \CI-PBE via dynamic programming.

In order to sequentially compute \CI-PBE we define a stage game for each time $t\in\mathcal{T}$ as follows.
\begin{definition}(Stage Game $G_t$)
Given a set of functions $V_{t+1}=\{V^n_{t+1}:\mathcal{X}^n_t\times\mathcal{B}_t\mapsto \mathbb{R},n\in\mathcal{N}\}$ and a belief update function $\psi_t$,
for any realization 
$b_t=(c_t,\pi_t,\hat\pi_t) \in\mathcal{B}_t$ 
we define the following 
Bayesian game $G_t(V_{t+1},\psi_t,b_t)$.
\\
\textbf{Stage Game $G_t(V_{t+1},\psi_t,b_t)$}
\begin{itemize}
\item There are $N$ players indexed by $\mathcal{N}$.
\item Each player $n\in \mathcal{N}$ observes private information $X^n_t\in \mathcal{X}^n_t$; $b_t=(c_t,\pi_t,\hat\pi_t)$ are common information.
\item $X_t = (X^1_t,\dots,X^N_t)$ has a prior distribution $\pi_t$.		

\item Each player $n\in \mathcal{N}$ selects an action $A^n_t \in \mathcal{A}^n_t$. 
\item Each player $n\in \mathcal{N}$ has utility 
				\begin{align}
				&U^n_{G_t(V_{t+1},\psi_t,b_t)} \nonumber\\
				:=&\phi^{n}_t(c_t,X_t,A_t)+V^{n}_{t+1}(X^n_{t+1},B_{t+1}), \text{ where }
				\label{eq:stageutility}
				\\
				&B_{t+1}:=(C_{t+1},\psi_{t}(Y_t,A_t,b_t),\hat\psi_{t}(Y_t,A_t,\hat\pi_{t})).
				\end{align}
\end{itemize}
\end{definition}

If for each $t\in\mathcal{T}$, the functions $V_{t+1}$ are associated with the agents' future utilities, the Bayesian game $G_t(V_{t+1},\psi_t,b_t)$ becomes a stage game at $t$ of the original game defined in Section \ref{sec:model}. 
Therefore, we consider Bayesian Nash equilibria (BNE) of the game $G_t(V_{t+1},\psi_t,b_t)$.
For all $V_{t+1}$ and $\psi_t$, we define, the BNE correspondence as follows
\begin{definition}(BNE Correspondence)
\begin{align}
&BNE_t(V_{t+1},\psi_t) := \nonumber\\
\{ &\lambda_t: \forall 
b_t \in \mathcal{B}_t,\lambda_t|_{b_t} \text{ is a BNE of }G_t(V_{t+1},\psi_t,b_t), \nonumber\\
&\text{ where }\lambda^n_t|_{b_t}(x^n_t):=\lambda^n_t(x^n_t,b_t)\, \forall\,n\in\mathcal{N},x^n_t\in\mathcal{X}^n_t
\}.
\label{eq:BNEt}
\end{align}
\end{definition}
If $\lambda_t|_{b_t}$ is a BNE of $G_t(V_{t+1},\psi_t,b_t)$, then for all $n\in\mathcal{N}$ and any realization $x^n_t\in\mathcal{X}^n_t$, 
any $a^n_t\in\mathcal{A}^n_t$ such that $\lambda^n_t|_{b_t}(x^n_t)(a^n_t)>0$ should satisfy
\begin{align}
a^n_t \in \argmax_{a'^n_t \in \mathcal{A}^n_t } \left\{
\mathbb{E}^{\lambda^{-n}_t}_{\pi_{t}} \left[U^n_{G_t(V_{t+1},\psi_t,b_t)}|x^n_t,a'^n_t\right]\right\}.
\end{align}


Similar to the dynamic program in stochastic control, for each time $t\in\mathcal{T}$ we define the value update function $D^n_t(V_{t+1},\lambda_t,\psi_t)$ for each $n\in\mathcal{N}$.
\begin{definition}(Value Update Function)
\begin{align}
&D^n_t(V_{t+1},\lambda_t,\psi_t)(x^n_t,b_t) :=
\mathbb{E}^{\lambda_t}_{\pi_{t}} \left[U^n_{G_t(V_{t+1},\psi_t,b_t)}|x^n_t \right].
\label{eq:valuefun}
\end{align}
\end{definition}
If $V^n_t= D^n_t(V_{t+1},\lambda_t,\psi_t)$, for any realization $x^n_t\in\mathcal{X}^n_t$ and $b_t\in\mathcal{B}_t$,
the value $V^{n}_{t}(x^n_t,b_t)$
denotes player $n$'s expected utility under the strategy profile $\lambda_t|_{b_t}$ in game $G_t(V_{t+1},\psi_t,b_t)$.


Using the concept of stage games and value update functions, we provide a dynamic programming method to sequentially compute \CI-PBE in the following theorem.
\begin{theorem}(Sequential Decomposition)
\label{thm:commonDP}
A pair $(\lambda^*,\psi^*)$ of a \CI\ strategy profile $\lambda^*$ and a \CI\ update rule $\psi^*$ is a \CI-PBE if $(\lambda^*,\psi^*)$ solves the dynamic program for the value functions $V^n_t(\cdot),n\in\mathcal{N},t\in\mathcal{T}\cup\{T+1\}$ defined by \eqref{eq:DP:end}-\eqref{eq:DP:valueupdate} below.
\begin{align}
V^{n}_{T+1}(\cdot):=0\quad\forall n\in \mathcal{N};
\label{eq:DP:end}
\end{align}
for all $t\in\mathcal{T}$
\begin{align}
&\lambda^*_t \in BNE_t(V_{t+1},\psi^*_t), \label{eq:DP:BNE}\\
&\psi^*_t \text{ is consistent with }\lambda^*_t,\label{eq:DP:consistent}\\
&V^n_t =D^n_t(V_{t+1},\lambda^*_t,\psi^*_t)\quad\forall n\in \mathcal{N}. \label{eq:DP:valueupdate}
\end{align}

\end{theorem}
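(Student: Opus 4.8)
The plan is to verify directly that the pair $(g^*,\mu^*)=f(\lambda^*,\psi^*)$ produced by Lemma~\ref{lm:befupdate} satisfies the two requirements of a PBE, namely \emph{consistency} (Definition~\ref{def:consistency}) and \emph{sequential rationality} (Definition~\ref{def:seqrational}). Consistency is immediate: condition \eqref{eq:DP:consistent} states that $\psi^*$ is consistent with $\lambda^*$, and Lemma~\ref{lm:befupdate} asserts precisely that for such a pair the constructed belief system $\mu^*$ is consistent with $g^*$. Hence the entire burden of the proof rests on establishing sequential rationality, i.e.\ that for every agent $n$, every $t\in\mathcal{T}$ and every history $h^n_t\in\mathcal{H}^n_t$, the tail strategy $g^{*n}_{t:T}$ solves the optimization \eqref{eq:seqrational} against $g^{*-n}$ under the belief system $\mu^*$.

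The first step is to reduce the search over all behavioral strategies in \eqref{eq:seqrational} to the search over \CI\ strategies. When every agent $k\neq n$ uses the \CI\ strategy $\lambda^{*k}$ together with the belief generated by $\psi^*$, Lemma~\ref{lm:Commonclose} guarantees that agent $n$ possesses a \CI\ best response at every history; consequently the optimal value of \eqref{eq:seqrational} is attained within the class of \CI\ strategies, so it suffices to show that $\lambda^{*n}$ itself attains this value. I will prove this by backward induction on $t$, with the inductive hypothesis that for each $(x^n_t,b_t)$ consistent with $h^n_t$ the value $V^n_t(x^n_t,b_t)$ of \eqref{eq:DP:valueupdate} equals the optimal continuation value in \eqref{eq:seqrational} and that $g^{*n}_{t:T}$ achieves it. The base case $t=T+1$ holds trivially from \eqref{eq:DP:end}, since no remaining utility is accrued.

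For the inductive step, fix $h^n_t$ with induced common-information statistic $b_t=(c_t,\pi_t,\hat\pi_t)$ and private state $x^n_t$. The key structural fact is the conditional independence established in Lemma~\ref{lm:condindep}: under $\mu^*$ and $g^{*-n}$, agent $n$'s belief about $X^{-n}_t$ is exactly the marginal $\pi_t$, which is common information, while the statistic $B_{t+1}=(C_{t+1},\psi^*_t(Y_t,A_t,b_t),\hat\psi_t(Y_t,A_t,\hat\pi_t))$ is a function of $(b_t,A_t,Y_t)$ alone. Thus the continuation decision of agent $n$ collapses to a Markov decision problem whose sufficient statistic is $(X^n_t,B_t)$, and by the tower property together with the inductive hypothesis the expected tail utility from choosing $a^n_t$ equals $\mathbb{E}^{\lambda^{*-n}_t}_{\pi_t}[\phi^n_t(c_t,X_t,A_t)+V^n_{t+1}(X^n_{t+1},B_{t+1})\mid x^n_t,a^n_t]$, which is precisely the stage-game payoff $U^n_{G_t(V_{t+1},\psi^*_t,b_t)}$ of \eqref{eq:stageutility}. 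Since \eqref{eq:DP:BNE} makes $\lambda^*_t$ a BNE of $G_t(V_{t+1},\psi^*_t,b_t)$ for \emph{every} $b_t\in\mathcal{B}_t$, the action prescribed by $\lambda^{*n}_t(x^n_t,b_t)$ maximizes this payoff against $\lambda^{*-n}_t$, and the resulting optimal value equals $D^n_t(V_{t+1},\lambda^*_t,\psi^*_t)(x^n_t,b_t)=V^n_t(x^n_t,b_t)$ by \eqref{eq:valuefun} and \eqref{eq:DP:valueupdate}. This advances the induction and completes the argument.

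I expect the main obstacle to be the rigorous justification that the continuation problem for agent $n$ genuinely reduces to a controlled Markov problem with state $(X^n_t,B_t)$ and value $V^n_{t+1}(X^n_{t+1},B_{t+1})$. This requires combining three ingredients carefully: the belief factorization of Lemma~\ref{lm:condindep}, so that $\pi_t$ summarizes everything agent $n$ needs about $X^{-n}_t$; the recursive, action-controlled evolution of $B_{t+1}$ as a function of $(B_t,A_t,Y_t)$; and the no-loss-of-optimality guaranteed by Lemma~\ref{lm:Commonclose}. A further delicate point, which the stage-game formulation handles automatically, is that sequential rationality must hold at \emph{every} history, including off-equilibrium ones of probability zero; because the BNE requirement \eqref{eq:DP:BNE} is imposed for all $b_t\in\mathcal{B}_t$ rather than only for on-path realizations, optimality is secured at off-path histories as well, with the corresponding beliefs pinned down through the signaling-free update in the zero-probability case of Definition~\ref{def:CIB_consistency}.
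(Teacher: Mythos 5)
Your proposal is correct and follows essentially the same route as the paper: consistency comes directly from Lemma~\ref{lm:befupdate} together with \eqref{eq:DP:consistent}, and sequential rationality is reduced to the class of \CI\ strategies via Lemma~\ref{lm:Commonclose} and the conditional-independence/MDP structure of Lemma~\ref{lm:condindep}, after which the stage-game BNE condition \eqref{eq:DP:BNE} delivers optimality at every (including off-path) $b_t$. The only difference is organizational: you run an explicit backward induction identifying $V^n_t(x^n_t,b_t)$ with the optimal continuation value, whereas the paper sandwiches $V^n_t$ between the value $V'^n_t$ of an explicit \CI\ best response (giving $V'^n_t\geq V^n_t$) and the stage-game optimality bound (giving $V^n_t\geq V'^n_t$); the two arguments use the same ingredients and are interchangeable.
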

%

\begin{proof}
See Appendix \ref{app:CIB}.
\end{proof}

Note that, from the dynamic program, the value function $V^n_1(x^n_1,(c_1,\pi_1,\pi_1))$ at time $t=1$ gives agent $n$'s expected utility corresponding to the \CI-PBE $(\lambda^*,\psi^*)$ conditional on his private information $X^n_1=x^n_1$ and public state $C_1 = c_1$ when the prior distribution of $X_1$ is $\pi_1$.

Using Theorem \ref{thm:commonDP}, we can compute \CI-PBE of the dynamic game.
The following algorithm uses backward induction to compute \CI-PBE based on Theorem \ref{thm:commonDP}.
\begin{algorithm}[H]
\caption{Backward Induction for Computing \CI-PBE}
\label{alg:BICIB}
\begin{algorithmic}[1]
\STATE $V_{T+1} \leftarrow 0$, $\lambda^* \leftarrow \emptyset$, $\psi^* \leftarrow \emptyset$
\FOR{$t=T$ to $1$}
	\FOR{every $b_t\in \mathcal{B}_t$}
%
		\STATE Construct the stage game $G_t(V_{t+1},\psi_t,b_t)$
		\STATE Compute $(\lambda^*_{t}|_{b_t},\psi^*_{t}|_{b_t})$ such that $\psi^*_{t}|_{b_t}$ is consistent with $\lambda^*_{t}|_{b_t}$, and $\lambda^*_{t}|_{b_t}$ is a BNE of $G_t(V_{t+1},\psi^*_t|_{b_t},b_t)$
		\label{eq:findBNE}
		\FOR{every $n\in \mathcal{N}$}
		\STATE $\lambda^{*n}_t(x^n_t,b_t) \leftarrow  \lambda^*_{t}|_{b_t}(x^n_t)$, $x^n_t\in\mathcal{X}^n_t$
		\STATE $\psi^{*n}_t(y_t,a_t,b_t) \leftarrow  \psi^*_{t}|_{b_t}(y_t,a_t)$,  $(y_t,a_t)\in\mathcal{Y}_t\times\mathcal{A}_t$
		\STATE $V^n_t(x^n_t,b_t) \!\leftarrow\!  D^n_t(V_{t+1},\lambda^*_t,\psi^*_t)(x^n_t,b_t)$, $x^n_t\in\mathcal{X}^n_t$
		\ENDFOR
	\ENDFOR
	\STATE $\lambda^* \leftarrow (\lambda^*_t,\lambda^*)$,  $\psi^* \leftarrow (\psi^*_t,\psi^*)$
\ENDFOR
\end{algorithmic}
\end{algorithm}
Note that in line \ref{eq:findBNE}, for different $(\lambda^*_{t}|_{b_t},\psi^*_{t}|_{b_t})$ the algorithm will produce different \CI-PBE.
Finding the pair $(\lambda^*_{t}|_{b_t},\psi^*_{t}|_{b_t})$ in line \ref{eq:findBNE} of Algorithm \ref{alg:BICIB} requires solving a fixed point problem to get a BNE along with a consistent belief system. The complexity for this step is the same as the complexity of finding a PBE for a two-stage dynamic game. 

\section{Example: Multiple Access Broadcast Game}
\label{sec:example}
In this section, we illustrate the sequential decomposition developed in Section \ref{sec:CIBequilibria} with an example of a two-agent multiple access broadcast system.

Consider a multiple access broadcast game where two agents, indexed by $\mathcal{N}=\{1,2\}$, share a common collision channel over time horizon $\mathcal{T}$. 
At time $t$, $W^n_t\in\{0,1\}$ packets arrive at each agent $n\in\mathcal{N}$ according to independent Bernoulli processes with 
$\mathbb{P}(W^1_t=1)=\mathbb{P}(W^2_t=1)=p=0.5$.
Each agent can only store one packet in his local buffer/queue.
Let $X^n_t\in \mathcal{X}^n_t=\{0,1\}$ denote the queue length (number of packets) of agent $n$ at the beginning of $t$.
If a packet arrives at agent $n$ when his queue is empty, the packet is stored in agent $n$'s buffer; otherwise, the packet is dropped, and agent $n$ incurs a dropping cost of $c=2$ units.

At each time $t$, agent $n$ can transmit $A^n_t\in\mathcal{A}^n_t=\{0,1\}$ packets through the shared channel. If only one agent transmits, the transmission is successful and the transmitted packet is removed from the queue. If both agents transmit simultaneously, a collision occurs and both collided packets remain in their queues.
We assume that any packet arriving at time $t$, $t\in\mathcal{T}$, can be transmitted after $t$.
Then, the queue length processes have the following dynamics. For $n=1,2$
\begin{align}
X^n_{t+1} = \min\left\{X^n_t-A^n_t(1-A^{-n}_t)+W^n_t,1\right\}.
\label{eq:mac_dynamics}
\end{align}

Assume that agents' transmission results at $t$ are broadcast at the end of time $t$. Then agent $n$'s transmission decision $A^n_t$ at time $t$ is made based on his history of observation $H^n_t = (X^n_{1:t},A_{1:t-1})$ that consists of his local queue lengths and all previous transmissions from both agents.

Suppose each agent gets a unit reward at $t$ if there is a successful transmission at $t$. Then, agent $n$'s utility at time $t$ 
is the reward minus the (expected) dropping cost given by
\begin{align}
 &U^n_t =\phi^{n}_t(X_t,A_t)= \nonumber\\
  &A^n_t\hspace{-0.3em}\oplus A^{-n}_t \hspace{-0.5em}- c\,\mathbb{P}(X^n_t\hspace{-0.3em}-A^n_t(1\hspace{-0.2em}-A^{-n}_t)+W^n_t > 1| X_t,A_t)
\label{eq:mac_utility}
\end{align}
where $x\oplus y$ denotes the binary XOR operator, and $n\in\mathcal{N}$.

The multiple access broadcast game described above is an instance of the general dynamic model described in Section \ref{sec:model}.
In the following, we use Algorithm \ref{alg:BICIB} developed in Section \ref{sec:CIBequilibria} to compute a \CI-PBE of this multiple access broadcast game for two time periods, i.e. $\mathcal{T}=\{1,2\}$.

Before applying Algorithm \ref{alg:BICIB}, we note some special features of this multiple access broadcast game.
First, there is no $C_t,Y_t$ in this multiple access broadcast game. Second, since the private state $X^n_t$ can take only values in $\mathcal{X}^n_t = \{0,1\}$, any \CI\ belief in $\Delta(\mathcal{X}^n_t)$ can be described by a number $\pi^n_t \in [0,1]$ for all $n=1,2, t=1,2$.
Furthermore, given any realization $b_t=(\pi_t,\hat\pi_t) \in \mathcal{B}_t$, any \CI\ strategy $\lambda^n_t(x^n_t,b_t), x^n_t \in\{0,1\},$ of agent $n$ can be characterized by a number $\beta^n_t\in[0,1]$ where
\begin{align}
\beta^n_t := \lambda^n_t(1,b_t)(1).
\end{align}
This is because $A^n_t$ is binary, and 
$\lambda^n_t(0,b_t)(1)=0$ 
because no packet can be transmitted from an empty queue.


We now use Algorithm \ref{alg:BICIB} to sequentially compute a \CI-PBE of the multiple access broadcast game.
\\\underline{Construction of the stage game at $t=2$}

At $t=2$, for any $b_2=(\pi_2,\hat\pi_2) \in \mathcal{B}_2$, we construct the stage game $G_2(b_2)$ which is a Bayesian finite game (no need to consider a \CI\ update function because this is the last stage).
\\\underline{Computation of BNE at $t=2$}

 Using standard techniques for static games, we obtain a BNE of $G_2(b_2)$ that is characterized by $\beta^*_2(b_2)= (\beta^{*1}_2(b_2),\beta^{*2}_2(b_2))$, and $\beta^*_2(b_2)$ is given by
\begin{align}
\beta^*_2(b_2) = \left\{
\begin{array}{ll}
(1,1) & \text{ if } \pi^1_2,\pi^2_2< c^*,\\
(0,1) & \text{ if } \pi^1_2< c^*,\pi^2_2 \geq c^*,\\
(1,0) & \text{ if } \pi^1_2\geq c^*,\pi^2_2< c^*,\\
(\frac{c^*}{\pi^1_2},\frac{c^*}{\pi^2_2}) & \text{ if } \pi^1_2,\pi^2_2\geq c^*,
\end{array}
\right.
\end{align}
where $c^*:= \frac{1+cp}{2+cp}$. Then we obtain a \CI\ strategy $\lambda^{*n}_2(1,b_2)(1)=\beta^{*n}_2(b_2)$ for $n=1,2$ at time $t=2$.
\\\underline{Value functions' update at $t=2$}

$V^n_2(x^n_2,b_2)=D^n_2(\lambda^*_2)(x^n_2,b_2), n=1,2,$ 
are given by
\begin{align}
&\hspace{-0.5em} V^n_2(1,b_2) \!=\! \left\{\hspace{-0.5em}
\begin{array}{ll}
1 \!-\! \pi^{-n}_2(1+cp) \hspace{-0.5em}& \text{if } \pi^1_2,\pi^2_2< c^*,\\
\pi^{-n}_2 -cp & \text{if } \pi^n_2< c^*,\pi^{-n}_2 \geq c^*,\\
1 & \text{if } \pi^{n}_2\geq c^*,\pi^{-n}_2< c^*,\\
c^* - cp & \text{if } \pi^1_2,\pi^2_2\geq c^*.
\end{array}
\right.
\label{eq:ValueN1}
\\
&\hspace{-0.5em} V^n_2(0,b_2) = \left\{\hspace{-0.5em}
\begin{array}{ll}
\pi^{-n}_2 & \text{ if } \pi^1_2,\pi^2_2< c^*,\\
\pi^{-n}_2& \text{ if } \pi^n_2< c^*,\pi^{-n}_2 \geq c^*,\\
0 & \text{ if } \pi^{n}_2\geq c^*,\pi^{-n}_2< c^*,\\
c^*  & \text{ if } \pi^1_2,\pi^2_2\geq c^*.
\end{array}
\right.
\label{eq:ValueN0}
\end{align}
\\\underline{Construction of the stage game at $t=1$}

At $t=1$, for any $b_1=(\pi_1,\hat\pi_1)\in\mathcal{B}_1$ and a \CI\ update function $\psi_1$, we construct the stage game $G_1(V_2,\psi_1,b_1)$ such that each player $n, n=1,2,$ has utility 
\begin{align}
& U^n_{G_1(V_2,\psi_1,b_1)} \nonumber\\
=& \phi^{n}_1(X_1,A_1) + V^{n}_{2}(X^n_{2},(\psi_{1}(A_1,b_1)),\hat\psi_{1}(A_1,\hat\pi_{1}))).
\label{eq:macutility1}
\end{align}
\iflongversion
\else	
\fi
\\\underline{Computation of BNE and belief update function at $t=1$}

When the players use \CI\ strategies $\lambda_1$ characterized by $\beta_1=(\beta^1_t,\beta^2_1)$, from \eqref{eq:tilpsiS} and \eqref{eq:tilpsiSD0}, we obtain a \CI\ update function $\psi_1$, given below, that is consistent with $\lambda_1$ (we select $\psi^{n}_1(a_1,b_1)\! =\!\hat\psi^{n}_1(a_1,\hat\pi_{1})$ when the denominator of \eqref{eq:tilpsiS} is zero).
\begin{align}
\hspace{-0.5em}\psi^n_{1}(a_t,b_1) = \left\{
\begin{array}{ll}
1 & \text{ if } a^n_t=1, a^{-n}_t=1,\\
p & \text{ if } a^n_t=1, a^{-n}_t=0,\\
\frac{p+\pi^n_1(1-p-\beta^n_1)}{1-\pi^n_1\beta^n_1} & \text{ if } a^n_t=0.
\end{array}
\right.
\label{eq:mac_beliefupdate}
\end{align}
\iflongversion
Substituting \eqref{eq:mac_beliefupdate} into \eqref{eq:macutility1}, 
\else	
Substituting \eqref{eq:mac_beliefupdate} into \eqref{eq:macutility1}, 
\fi
we have the utilities of the two players in game $G_1(V_2,\psi_1,b_1)$. We numerically compute a BNE of $G_1(V_2,\psi^*_1,b_1)$, characterized by $\beta^*_1(b_1)=(\beta^{*1}_1(\pi_1),\beta^{*2}_1(\pi_1))$ such that $\psi^*_1$ satisfies \eqref{eq:mac_beliefupdate} when $\beta_1 = \beta^*_1$.
The values of $\beta^{*1}_1(\pi_1)$ and $\beta^{*2}_1(\pi_1)$ are shown in Fig. \ref{fig:beta1} for different $\pi_1 \in \Delta(\mathcal{X}_t)=[0,1]\times[0,1]$.

\begin{figure}[h]
\centering
\includegraphics[width=1\textwidth]{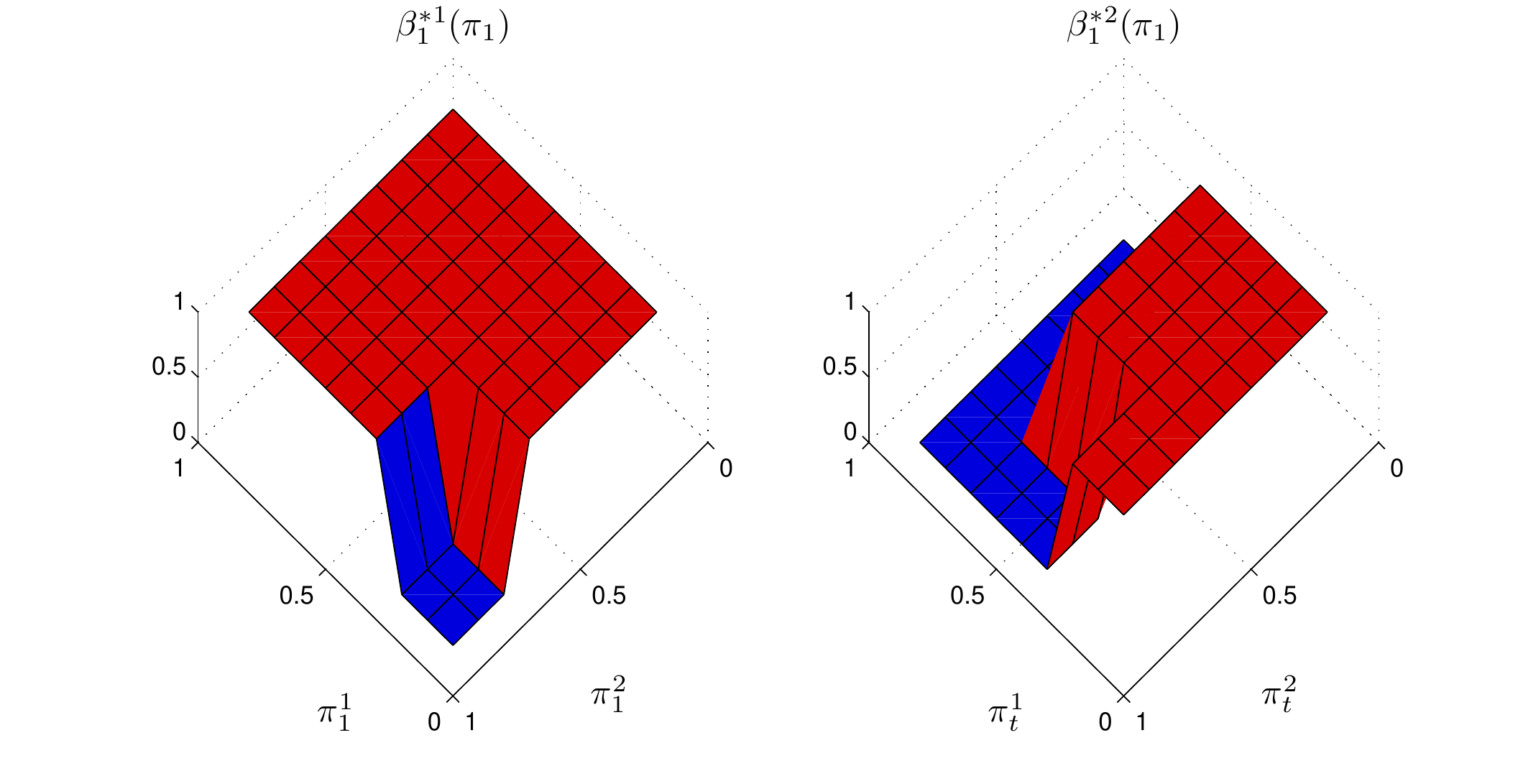}
\caption{Strategies $\beta^{*1}_1(\pi_1)$ and $\beta^{*2}_1(\pi_1)$ in the stage game at time $t=1$.}
\label{fig:beta1}
\end{figure}
Then, we obtain a \CI\ strategy $\lambda^{*n}_1(1,b_1)(1)=\beta^{*n}_1(\pi_1)$ for $n=1,2$ at time $t=1$.
\\
\underline{\CI-PBE and agents' expected utilities}

From the above computation at $t=1,2$, we obtain a \CI-PBE $(\lambda^*,\psi^*)$, where $\lambda^* = (\lambda^*_1,\lambda^*_2)$ and $\psi^* = (\psi^*_1,-)$.

\iflongversion
Using \eqref{eq:macutility1}, 
\else 
Using \eqref{eq:macutility1}, 
\fi
we numerically compute the value functions $V^1_1(x^1_1,b_1)= V^1_1(x^1_1,\pi_1)=D^1_1(V_2,\lambda^*_1,\psi^*_1)(x^1_1,b_1)$ for $x^1_t=0,1,$ for agent $1$. The results are shown in Fig. \ref{fig:values}. These value functions give agent $1$'s (conditional) expected utilities in the \CI-PBE $(\lambda^*,\psi^*)$. Agent $2$'s expected utilities can be computed in a similar way.
\begin{figure}[h]
\centering
\includegraphics[width=1\textwidth]{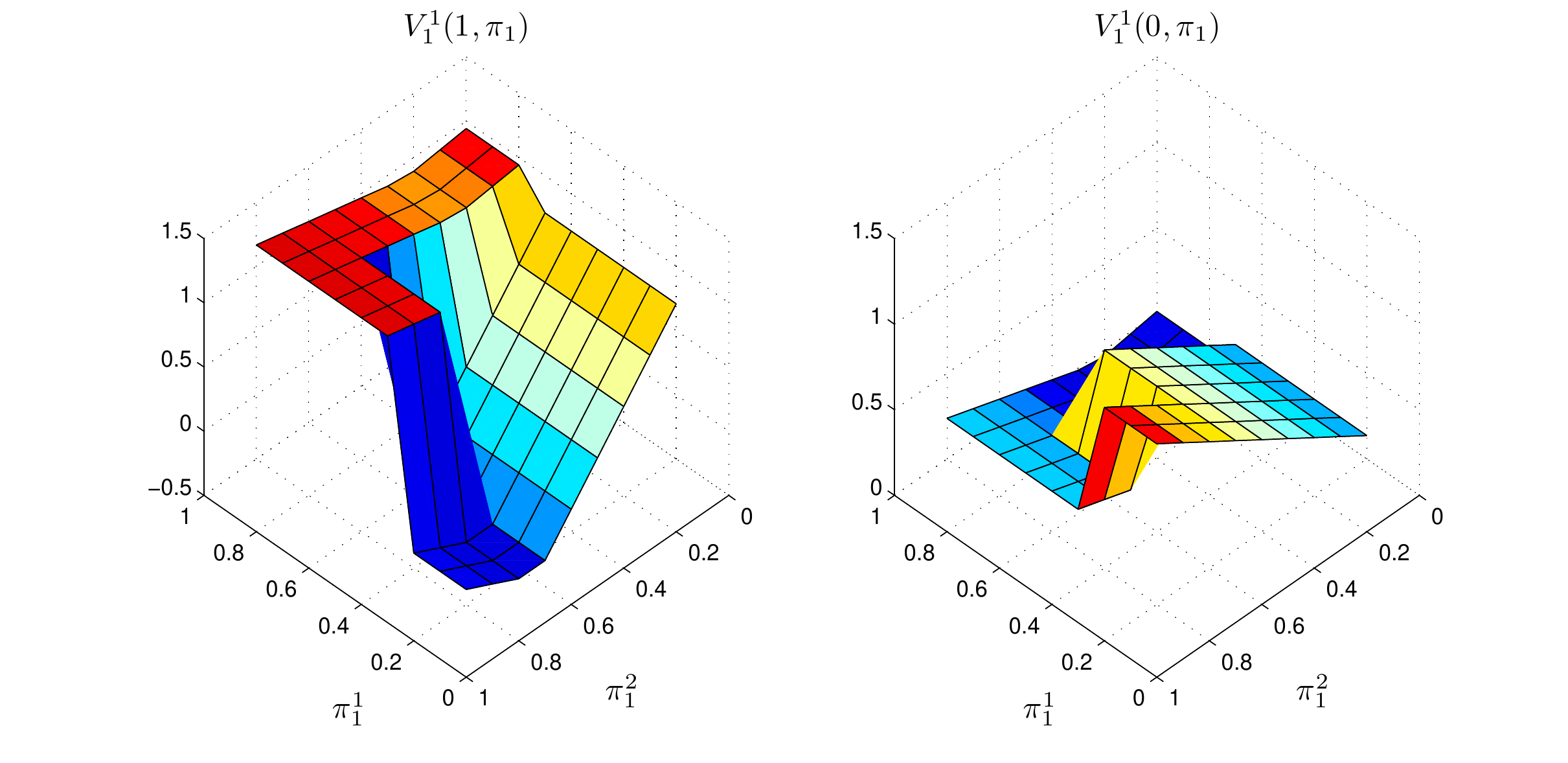}
\caption{Agent $1$'s expected utility $V^n_1(x^n_1,\pi_1)$ in the \CI-PBE $(\lambda^*,\psi^*)$.}
\label{fig:values}
\end{figure}

\begin{remark}
\label{rm:value_discontinuity}
The results show that agents' beliefs depend on their strategies (see \eqref{eq:mac_beliefupdate}). Therefore, there is signaling in this multiple access broadcast game.
Moreover, the value functions are discontinuous in the agents' beliefs
(see \eqref{eq:ValueN1} and \eqref{eq:ValueN0} for time $t=2$, and  Fig. \ref{fig:values} for time $t=1$). The presence of signaling together with the discontinuity of value functions make the agents' utilities discontinuous in their (behavioral) strategies.

\end{remark}

\section{Existence of Common Information Based Perfect Bayesian Equilibria}
\label{sec:existence}
We prove the existence of a \CI-PBE for a subclass of the dynamic games described in section \ref{sec:model}. This subclass includes dynamic games with uncontrolled dynamics and no private values. No private values simply means that each agent's private information $X_t^n$ is payoff irrelevant to himself, but possibly payoff relevant to the other agents. The classic cheap-talk game \cite{fudenberg1991game} and the problem considered in \cite{ouyang2015dynamic} are examples of this subclass. We conjecture that there always exists a \CI-PBE for the general model described in Section \ref{sec:model}. We discuss this conjecture and elaborate more on the difficulty of establishing the existence proof for the general model of Section \ref{sec:model} at the end of this section.

To proceed formally, let \textbf{Game M} denote a dynamic game with uncontrolled dynamics, no private values, finite action spaces $\mathcal{A}_t^n$, $n\in\mathcal{N}$, $t\in\mathcal{T}$, and (possibly) sequential moves. Let $\overline{\mathcal{T}}:=\{t_1,t_2,\cdots,t_K\}\subset \mathcal{T}$ denote the set of time instants in which the system evolves according to the following uncontrolled dynamics

\begin{align}
\hspace{-0.5em}X_{t+1}^n\!=\! \left\{\hspace{-0.5em} \begin{array}{ll}
X^n_t & \text{if } t \neq t_k \text{ for all } t_k\in\overline{\mathcal{T}},\\
f_{t_k}^n(X_{t_k}^n,W_{t_k}^{n,X}) & \text{if } t = t_k \text{ for } t_k\in\overline{\mathcal{T}}.
\end{array} \right.
\end{align}
At $t_k<t\leq t_{k+1}$ agents make decisions sequentially in $t_{k+1}-t_k$ epochs. We assume that the order according to which the agents take decisions is known a priori. Furthermore, agents observe the other agents' decisions in previous epochs; this fact is captured by including/appending previous actions in the common state $C_t$ as follows
\begin{align}
\hspace{-0.5em}C_{t+1} \!=\!
\left\{\hspace{-0.5em}\begin{array}{ll}
(C_{t},A_{t}) & \text{\!\!if } t \neq t_{k}\text{ for all } t_k\in\overline{\mathcal{T}},\\
f_{t_k}^c(C_{t_{k-1}+1},W_{t_k}^{C}) & \text{\!\!if } t = t_k \text{ for } t_k\in\overline{\mathcal{T}}.
\end{array} \right. 
\end{align}
The agents have a common observation $Y_{t_k}$ at each time $t_k\in\overline{\mathcal{T}}$ when the system evolves.
The observations $Y^n_t, n\in\mathcal{N}, t\in \mathcal{T}$ are described by
\begin{align}
\hspace{-0.5em}&Y_{t}^n\!=\!\left\{ \begin{array}{ll}
\text{empty} & \text{\!if } t \neq t_k \text{ for all } t_k\in\overline{\mathcal{T}},\\
h_{t_k}^n(X_{t_k}^n,W_{t_k}^{n,Y}) & \text{\!if } t = t_k \text{ for } t_k\in\overline{\mathcal{T}}.
\end{array} \right.
\end{align}

Agent $n$, $n\in\mathcal{N}$ has instantaneous utility 
\begin{align}
U^n_t = \phi^n_t(C_t,X^{-n}_t,A_t).
\label{eq:G2U}
\end{align}
for time $t$, $t\in\mathcal{T}$. Thus, each agent $n\in\mathcal{N}$ has no private values, hence his private information $X_t^n$ is payoff irrelevant.

From the above description, it is evident that \textbf{Game M} is indeed a subclass of the class of dynamic games described by the model of Section \ref{sec:model}.  The dynamic oligopoly game presented in \cite{ouyang2015dynamic} is an instance of \textbf{Game M}.

The main result of this section is stated in the theorem below and asserts the existence of a \CI-PBE in \textbf{Game M}.
\begin{theorem}
\label{thm:publiccommon}
\textbf{Game M} described in this section has a \CI-PBE which is a solution to the dynamic program defined by \eqref{eq:DP:end}-\eqref{eq:DP:valueupdate} in Theorem \ref{thm:commonDP}.
\end{theorem}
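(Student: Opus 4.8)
The plan is to prove existence by exhibiting a solution of the backward-induction dynamic program \eqref{eq:DP:end}--\eqref{eq:DP:valueupdate}; by Theorem \ref{thm:commonDP} any such solution is a \CI-PBE. I would argue by backward induction on $t$. The terminal condition $V^n_{T+1}\equiv 0$ is immediate. For the inductive step, fix $t$ and a bounded family $V_{t+1}=\{V^n_{t+1}\}_{n\in\mathcal{N}}$, and show that for every common-information realization $b_t=(c_t,\pi_t,\hat\pi_t)\in\mathcal{B}_t$ there is a pair $(\lambda^*_t|_{b_t},\psi^*_t|_{b_t})$ with $\psi^*_t|_{b_t}$ consistent with $\lambda^*_t|_{b_t}$ in the sense of Definition \ref{def:CIB_consistency} and with $\lambda^*_t|_{b_t}$ a BNE of the stage game $G_t(V_{t+1},\psi^*_t,b_t)$, as required by \eqref{eq:DP:BNE}--\eqref{eq:DP:consistent}. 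Defining $V^n_t:=D^n_t(V_{t+1},\lambda^*_t,\psi^*_t)$ via \eqref{eq:valuefun} then closes the induction, and boundedness is preserved since $\phi^n_t$ is bounded and there are finitely many continuation values.

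The heart of the argument is the stage-game fixed point. For fixed $b_t$ the stage strategies range over the compact convex set $\prod_{n\in\mathcal{N}}\prod_{x^n_t\in\mathcal{X}^n_t}\Delta(\mathcal{A}^n_t)$, and to each candidate profile $\lambda_t|_{b_t}$ one associates the belief $\psi_t|_{b_t}$ determined by consistency \eqref{eq:tilpsiS}, completed off the support by the signaling-free update $\hat\psi_t$ through \eqref{eq:tilpsiSD0}; this assignment is well defined by the Remark following Definition \ref{def:CIB_consistency}. Inserting $\psi_t|_{b_t}$ into the stage utilities \eqref{eq:stageutility} produces a finite Bayesian game whose best-response sets are nonempty and convex, so a fixed point of the composed correspondence ``$\lambda_t\mapsto\psi_t(\lambda_t)\mapsto \mathrm{BNE}$'' is exactly a BNE together with a belief consistent with it, and Kakutani's theorem yields it once this correspondence is shown to be upper hemicontinuous with closed graph. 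This is where the two defining features of \textbf{Game M} enter: the \emph{uncontrolled dynamics} make the kernels $p^n_t,q^n_t$, and hence the signaling-free map $\hat\psi_t$ of Lemma \ref{lm:SFupdate}, independent of the action profile, so that the belief map $\lambda_t\mapsto\psi_t(\lambda_t)$ varies only through the type-revelation contained in \eqref{eq:tilpsiS}; and the \emph{no private values} assumption \eqref{eq:G2U} removes the direct dependence of $\phi^n_t$ on agent $n$'s own type, so that agent $n$'s stage incentives couple to the belief only through the other agents' types.

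The step I expect to be the main obstacle is verifying upper hemicontinuity in the presence of the discontinuity of $V_{t+1}$ in its belief argument highlighted in Remark \ref{rm:value_discontinuity}. Even when all actions are played with positive probability, so that \eqref{eq:tilpsiS} applies with nonzero denominator and $\lambda_t\mapsto\psi_t(\lambda_t)$ is continuous, the induced posterior $\psi_t(y_t,a_t,b_t)$ may cross a discontinuity of $V^n_{t+1}$ as $\lambda_t$ varies, so the expected stage payoff need not be continuous in $\lambda_t$ and the classical Glicksberg/Kakutani continuity hypotheses fail outright. I would circumvent this by an approximation argument: replace each $V^n_{t+1}$ by a continuous (e.g.\ piecewise-linear mollified) approximation, solve the resulting continuous stage games by Kakutani to obtain pairs $(\lambda^\epsilon_t,\psi^\epsilon_t)$, and pass to a convergent subsequence as the approximation is refined. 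Compactness delivers a limit $(\lambda^*_t,\psi^*_t)$, and the remaining work is to check that the limit is an exact BNE with a consistent belief: consistency on the support follows from continuity of \eqref{eq:tilpsiS} there, while the off-support restriction \eqref{eq:tilpsiSD0} is inherited because each $\psi^\epsilon_t$ is a Bayesian update anchored at $\hat\psi_t$. The genuinely delicate point is to rule out, in the limit, a profitable unilateral deviation that jumps across a discontinuity of $V^n_{t+1}$; it is precisely here that \emph{no private values} is indispensable, since the absence of an own-type payoff term prevents the discontinuous continuation value from generating a strictly profitable type-contingent deviation in the limit. The same mechanism is unavailable in the general model of Section \ref{sec:model}, which is why existence there remains only a conjecture.
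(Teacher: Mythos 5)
There is a genuine gap. Your argument hinges on passing to the limit of equilibria of mollified stage games, and you acknowledge that the classical continuity hypotheses for Kakutani/Glicksberg fail because $V^n_{t+1}$ is discontinuous in the belief argument. The assertion that ``no private values'' prevents a profitable deviation from surviving the limit is not substantiated, and it is not even true as stated: although the instantaneous utility \eqref{eq:G2U} does not depend on $X^n_t$, the continuation term $V^n_{t+1}(X^n_{t+1},B_{t+1})$ in the stage utility \eqref{eq:stageutility} \emph{does} depend on agent $n$'s own private state (indeed $X^n_{t+1}=X^n_t$ between transition epochs), so the stage game is not type-independent without further work. In general, limits of equilibria of payoff-perturbed games need not be equilibria of a discontinuous limit game, so the ``genuinely delicate point'' you flag is exactly where the proof is missing, not merely delicate. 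You also keep the belief map $\lambda_t\mapsto\psi_t(\lambda_t)$ defined by \eqref{eq:tilpsiS}, which preserves the circular strategy--belief dependence that makes the general existence question hard; nothing in your argument breaks that circularity.

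The paper's proof avoids all of this by being constructive. It fixes the update rule to be the signaling-free one, $\psi^{*n}_t(y^n_t,a_t,b_t)=\hat\psi^n_t(y^n_t,\hat\pi_t)$, which in \textbf{Game M} is action- and strategy-independent (Lemma \ref{lm:SFbefACind}), so beliefs are pinned down before any fixed point is sought. It then proves by backward induction the decomposition $V^n_t(x^n_t,b_t)=\tilde U^n_t(c_t,\hat\pi_t)+\tilde V^n_t(x^n_t,c_{t_k+1},\hat\pi_t)$, i.e.\ the own-type-dependent part of the continuation value is an additive term that is unaffected by the actions. Consequently each stage game is payoff-equivalent to a \emph{finite} game whose utilities do not depend on the player's own type; a standard (mixed) Nash equilibrium of that finite game exists, is non-private (the same mixture for every $x^n_t$), and with non-private strategies the Bayes update \eqref{eq:tilpsiS} collapses to $\hat\psi_t$, so consistency \eqref{eq:DP:consistent} holds automatically. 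The missing ingredients in your proposal are precisely this choice of $\psi^*$ and the inductive separability of $V^n_t$; with them, no fixed-point argument on the joint strategy--belief correspondence, and no approximation of discontinuous value functions, is needed.
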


\begin{proof}
See Appendix \ref{app:existence}.
\end{proof}

The proof of Theorem \ref{thm:publiccommon} is constructive. We construct an equilibrium for \textbf{Game M} in which agents use non-private strategies and have signaling-free beliefs which are consistent with the non-private strategy profile.


There are three reasons why \textbf{Game M} has a \CI-PBE with non-private strategies. 
First, the instantaneous utility $U^n_t = \phi^n_t(C_t,X^{-n}_t,A_t)$ of agent $n,n\in\mathcal{N}$ does not depend on his private information. Therefore, the agent's best response is the same for all realizations of his private information, and a private strategy doest not provide any advantage in terms of higher instantaneous utility. Second, the system dynamics are strategy independent. Therefore, an agent cannot affect the evolution of the system by using a strategy that depends on his private information about the state of the system.
Third, any private strategy does not provide any advantage to an agent in terms of his utility if it can not affect other agents' decisions, and this is the case when all agents use the signaling-free beliefs.

As we showed before, the \CI-PBE introduced in this paper are PBE. 
It is known that for finite dynamic games there always exists one sequential equilibrium, and therefore one PBE \cite{fudenberg1991game,osborne1994course}. The proof of existence of sequential equilibria is indirect; it is done by showing the existence of a trembling hand equilibrium \cite{fudenberg1991game,osborne1994course} which is also a sequential equilibrium. The proof of existence of trembling hand equilibrium follows the standard argument in game theory. It uses a suitable fixed point theorem to show the existence of a trembling equilibrium in an equivalent agent-based model representation \cite{fudenberg1991game,osborne1994course}.

There are some technical difficulties in establishing the existence of a \CI-PBE for the general game model considered in this paper. The standard argument in using fixed point theorems is applicable to finite games where the expected utilities are continuous in the agent's mixed strategies. In each stage game arising in the sequential decomposition, say the game at stage $t$, agent $n$'s expected utility (see \eqref{eq:stageutility}) depends on the functions $\{V^n_{t+1},n\in\mathcal{N}\}$. However, the function $V^n_{t+1}$ is not always continuous in the strategies of agent $n$. 
(see Remark \ref{rm:value_discontinuity} for the multiple access broadcast game in Section \ref{sec:example} and the example in \cite{ouyang2015dynamic}). 
Therefore, the standard argument for establishing the existence of an equilibrium fails for our general model. Even though we can not prove the existence of a \CI-PBE equilibrium, we conjecture that there always exists a \CI-PBE for the general dynamic game described in this paper. 

We note that for the problem formulated in Section \ref{sec:model}, $\{X_t^t,C_t,\Pi_t,\hat{\Pi}_t\}$, $t\in\mathcal{T}$, are sufficient statistics from the decision making point of view (i.e. control theory). This makes a \CI\ strategy a more natural strategy choice for an agent, and consequently, a \CI-PBE is a more plausible equilibrium to arise in practice. However, this does not imply that from the game theory point of view, at all equilibria agents' best responses can be generated using only  $\{X_t^t,C_t,\Pi_t,\hat{\Pi}_t\}$. In a game problem, agents can incorporate a payoff irrelevant information in their strategy choice as a coordination instrument. 
For example, consider the classic repeated prisoner's dilemma game. In this game, agents can use previous outcomes of the game, that are payoff irrelevant, to sustain a punishment mechanism that results in additional equilibria beyond the repetition of the stage-game equilibrium \cite{mailath2006repeated}. The indirect proof for existence of sequential equilibria and PBE (described above) allows for this type of equilibria that depend on payoff irrelevant information for coordination. Nevertheless, we conjecture that there always exists an equilibrium for the game described in Section \ref{sec:model} that depends only on $\{X_t^t,C_t,\Pi_t,\hat{\Pi}_t\}$. The example of the dynamic multiple access broadcast game in Section \ref{sec:example} is an instance of a dynamic game that does not belong to the subclass of \textbf{Game M}, but has a \CI-PBE. To make our conjecture more precise, we provide below a sufficient condition for the existence of a \CI-PBE.

Let $(g^*,\mu^*)$ be a strategy profile that is a PBE. Consider the following condition. 

\textbf{Condition C:} For all $h_t^c,h_t^{'c}\in\mathcal{H}^c_t$ such that
\begin{align}
 \mathbb{P}_{\mu^*}(x_t|h_t^c)= \mathbb{P}_{\mu^*}(x_t|h_t^{'c}), \forall x_t\in \mathcal{X}_t,\forall n\in\mathcal{N},\forall t\in\mathcal{T} ,
\end{align}
we have
 \begin{align}
 g^{n*}_t(x_{1:t}^n,h_t^c)=g^{n*}_t(x_{1:t}^n,h_t^{c'}),
 \end{align}
 for all $x_{1:t}^n\in\mathcal{X}^n_{1:t}$.

If \textbf{Condition C} is satisfied for a PBE $(g^*,\mu^*)$, then a \CI-PBE exists.
We conjecture that \textbf{Condition C} is satisfied for at least a PBE in the dynamic games considered in this paper.

%
%
%
%
%
%

\section{Conclusion}
\label{sec:conclusion}
We studied a general class of stochastic dynamic games with asymmetric information. 
We identified game environments that can lead to signaling in dynamic games. We considered a state space model to analyze dynamic games with private states and controlled Markovian dynamics. We provided a comparison between our state space model and classic extensive game form model. We showed that the two models are equivalent as long as one ensures that the belief system associated with the state space model is compatible with the system dynamics. To ensure the compatibility, we introduced the signaling-free belief system. Using the signaling-free belief system, we provided a formal definition of PBE in our state space model. We used the common information among agents to define a subset of PBE called \CI-PBE that consist of a pair of \CI\ strategy profile and \CI\ update rule. We obtained a sequential decomposition for dynamic games that leads to a backward induction algorithm for the computation of \CI-PBE even when signaling occurs. We illustrated our result with an example of multiple access broadcast game. We proved the existence of \CI-PBE for a subclass of dynamic games and provided a sufficient condition for the existence of \CI-PBE for the general class of dynamic games considered in this paper.


\appendices
\section{}
\label{app:CIB}
\begin{proof}[Proof of Lemma \ref{lm:SFupdate}]
The lemma is proved by induction. Since the initial states are independent, \eqref{eq:SFindep} holds at $t=1$.
Suppose the lemma is true at time $t$.
Given any $h^c_{t+1}=\{c_{1:t+1},y_{1:t},a_{1:t}\}$ at $t+1$, we have from Bayes' rule
\begin{align}
&\hat \pi_{t+1}(x_{t+1})  \nonumber\\
= & \mathbb{P}_{(A_{1:t}=a_{1:t})}(x_{t+1}|y_{1:t}) \nonumber\\
= & \frac{\mathbb{P}_{(A_{1:t}=a_{1:t})}(x_{t+1},y_t|y_{1:t-1})}{\sum_{x'_{t+1}\in \mathcal{X}_{t+1}}\mathbb{P}_{(A_{1:t}=a_{1:t})}(x'_{t+1},y_t|y_{1:t-1})}.
\label{eq:SFbef1}
\end{align}
The numerator in \eqref{eq:SFbef1} can be further expressed by
\begin{align}
& \mathbb{P}_{(A_{1:t}=a_{1:t})}(x_{t+1},y_t|y_{1:t-1})
\nonumber\\
=& \sum_{x_t \in \mathcal{X}_t}\mathbb{P}_{(A_{1:t}=a_{1:t})}(x_{t+1},y_t,x_t|y_{1:t-1})
\nonumber\\
\stackrel{(a)}{=}& \sum_{x_t\in \mathcal{X}_t} \prod_{n=1}^N p^n_t(x^n_{t+1};x^n_t,a_t) q^n_t(y^n_t;x^n_t,a_t) \hat\pi^n_t(x^n_t)
\nonumber\\
=& \prod_{n=1}^N\sum_{x^n_t\in \mathcal{X}^n_t} p^n_t(x^n_{t+1};x^n_t,a_t) q^n_t(y^n_t;x^n_t,a_t) \hat\pi^n_t(x^n_t).
\label{eq:SFbef2}
\end{align}
Equation (a) in \eqref{eq:SFbef2} follows from the system dynamics, the fact $A_t=a_t$ and the induction hypothesis for the lemma.
Substituting \eqref{eq:SFbef2} into both the numerator and denominator of \eqref{eq:SFbef1} we get
\begin{align}
&\hat \pi_{t+1}(x_{t+1})  \nonumber\\
= & \prod_{n=1}^N\frac{\sum_{x^n_t\in \mathcal{X}^n_t} p^n_t(x^n_{t+1};x^n_t,a_t) q^n_t(y^n_t;x^n_t,a_t) \hat\pi^n_t(x^n_t) }
{\sum_{x'^n_t\in \mathcal{X}^n_t} q^n_t(y^n_t;x'^n_t,a_t) \hat\pi^n_t(x'^n_t)}
\nonumber\\
= & \prod_{n=1}^N\hat\psi^n_t(y^n_t,a_t,\hat \pi_{t})(x^n_{t+1}) .
\end{align}

\end{proof}

\begin{proof}[Proof of Lemma \ref{lm:befupdate}]
If  $\lambda$ is a \CI\ strategy profile and $\psi$ is a \CI\ update rule consistent with $\lambda$, 
we define $g\in \mathcal{G}$ to be the strategy profile constructed by \eqref{eq:getaS} from $(\lambda,\psi)$.

We proceed to recursively define a belief system $\mu$ and maps $\{\mu^c_t,t\in\mathcal{T}\}$ that satisfy \eqref{eq:mut}-\eqref{eq:muct}, and are such that $\mu$ is consistent with $g$.
For that matter, we first define the signaling-free belief $\hat \mu^c_t: \mathcal{H}^c_t \mapsto \Delta(\mathcal{X}_{1:t})$ given $h^c_t = \{a_{1:t-1},y_{1:t-1}\} \in \mathcal{H}^c_t$ such that for any $x_{1:t} \in \mathcal{X}_{1:t}$
\begin{align}
\hat \mu^c_t(h^c_t)(x_{1:t}) := \mathbb{P}_{(A_{1:t-1}=a_{1:t-1})}(x_{1:t}|y_{1:t-1}).
\end{align}

At time $t=1$ we define, for all $h^n_1=(x^n_1,h^c_1)\in\mathcal{H}^n_1, n\in\mathcal{N}$ and for all $x_1 \in \mathcal{X}_1$
\begin{align}
 \mu^c_1(h^c_1)(x_{1}) :=  &\mathbb{P}(x_1)
\label{eq:muatone2}
\\
 \mu^n_1(h^n_1)(x_{1}) :=  &\mathbf{1}_{\{x^n_1|h^n_1\}} \mathbb{P}(x^{-n}_1)
\label{eq:muatone}
\end{align}
Then, \eqref{eq:mut} and \eqref{eq:muct} are satisfied at time $1$, and $g$ is consistent with $\mu$ before time $1$. (basis of induction)

Suppose  $\mu^c_t(h^c_t)(\cdot)$ and $\mu^n_t(h^n_t)(\cdot)$ are defined, \eqref{eq:mut} and \eqref{eq:muct} are satisfied at time $t$, and $g$ is consistent with $\mu$ before time $t$ (induction hypothesis).

We proceed to define $\mu^c_{t+1}(h^c_{t+1})(\cdot)$, and $\mu^n_{t+1}(h^n_{t+1})(\cdot)$,
and prove that \eqref{eq:mut} and \eqref{eq:muct} are satisfied at time $t+1$, and $g$ is consistent with $\mu$ before time $t+1$.
We first define 
\begin{align}
&\eta^k_t(x^k_{t},y^k_t,a_t,h^c_t)
:=\eta^k_t(x^k_{t},y^k_t,a_t,(c_t,\gamma_{\psi,t}(h^c_t),\hat \gamma_t(h^c_t)))\nonumber\\
= &q^k_t(y^k_{t};x^k_t,a_t)\lambda^{k}_t(x^n_{t},c_t,\gamma_{\psi,t}(h^c_t),\hat \gamma_t(h^c_t))(a^n_t).
\label{eq:etainproof}
\end{align}

At time $t+1$, for any histories $h^c_{t+1}$ and $h^n_{t+1}, n\in\mathcal{N}$, we define the beliefs
\begin{align}
&\mu^c_{t+1}(h^c_{t+1})(x_{1:t+1}) :=  \prod_{k\in \mathcal{N}}\mu^c_{t+1}(h^c_{t+1})(x^k_{1:t+1}),
\label{eq:mutupdatec1}
\\
&\mu^n_{t+1}(h^n_{t+1})(x_{1:t+1}) \nonumber\\
:=& \mathbf{1}_{\{x^n_{1:t+1}\}}(h^n_{t+1}) \prod_{k\neq n}\mu^c_{t+1}(h^c_{t+1})(x^k_{1:t+1}),
\label{eq:mutupdate1}
\end{align}
where for for any $k\in\mathcal{N}$
\begin{align}
&\mu^c_{t+1}(h^c_{t+1})(x^k_{1:t+1}) 
\nonumber\\
:= &
\frac{p^k_t(x^k_{t+1};x^k_t,a_t) \eta^k_t(x^k_{t},y^k_t,a_t,h^c_t)\mu^c_{t}(h^c_{t})(x^{k}_{1:t})}
	{\sum_{x'^k_{t} \in \mathcal{X}^k_t}\eta^k_t(x'^k_{t},y^k_t,a_t,h^c_t)  \mu^c_t(h^c_{t})(x'^{k}_t)} 
\label{eq:mutupdatec_nonzero}
\end{align}
when the denominator of \eqref{eq:mutupdatec_nonzero} is non-zero;
when the denominator of \eqref{eq:mutupdatec_nonzero} is zero, $\mu^c_{t+1}(h^c_{t+1})(x^k_{1:t+1}) $ is defined by
\begin{align}
&\mu^c_{t+1}(h^c_{t+1})(x^k_{1:t+1}) 
\nonumber\\
:= &
\left\{\begin{array}{ll}
0 \quad\quad\text{ when }\hat \mu^c_{t+1}(h^c_{t+1})(x^k_{1:t+1}) =0,
\\
\frac{\gamma_{\psi,t+1}(h^c_{t+1})(x^{k}_{t+1})}{|\{x'^k_{1:t}\in \mathcal{X}^k_{1:t} :\,  \hat \mu^c_{t+1}(h^c_{t+1})(x'^k_{1:t},x^k_{t+1}) \neq 0 \}|} \\
\quad\quad\quad\text{ when }\hat \mu^c_{t+1}(h^c_{t+1})(x^k_{1:t+1}) \neq 0.
\end{array}\right.
\label{eq:mutplusupdatec_zero}
\end{align}

Then \eqref{eq:mut} at $t+1$ follows directly from the above construction.
We proceed to prove \eqref{eq:muct} at $t+1$. 

First consider the case when the denominator of \eqref{eq:mutupdatec_nonzero} is zero. Then,
for any $k\in\mathcal{N}$, we obtain, because of \eqref{eq:mutplusupdatec_zero},
\begin{align}
 &\mu^c_{t+1}(h^c_{t+1})(x^k_{t+1}) \nonumber\\
=&\sum_{x^k_{1:t}\in\mathcal{X}^k_{1:t}}\mu^c_{t+1}(h^c_{t+1})(x^k_{1:t+1}) \nonumber\\
=&\quad\quad\quad\sum_{\mathclap{\substack{x^k_{1:t}\in\mathcal{X}^k_{1:t}: \\ \hat \mu^c_{t+1}(h^c_{t+1})(x^k_{1:t+1}) \neq 0 }} }\quad
\frac{\gamma_{\psi,t+1}(h^c_{t+1})(x^{k}_{t+1})}{|\{x'^k_{1:t}\in \mathcal{X}^k_{1:t} :\,  \hat \mu^c_{t+1}(h^c_{t+1})(x'^k_{1:t},x^k_{t+1}) \neq 0 \}|}\nonumber\\
=&\gamma_{\psi,t+1}(h^c_{t+1})(x^{k}_{t+1}).
\label{eq:muct_step1}
\end{align}
When the denominator of \eqref{eq:mutupdatec_nonzero} is non-zero, from \eqref{eq:mutupdatec_nonzero} we get
\begin{align}
 &\mu^c_{t+1}(h^c_{t+1})(x^k_{t+1}) \nonumber\\
=&\sum_{x^k_{1:t}\in\mathcal{X}^k_{1:t}}\mu^c_{t+1}(h^c_{t+1})(x^k_{1:t+1}) \nonumber\\
=&\sum_{x^k_{1:t}\in\mathcal{X}^k_{1:t}} \frac{p^k_t(x^k_{t+1};x^k_t,a_t) \eta^k_t(x^k_{t},y^k_t,a_t,h^c_t)\mu^c_{t}(h^c_{t})(x^{k}_{1:t})}
	{\sum_{x'^k_{t} \in \mathcal{X}^k_t}\eta^k_t(x'^k_{t},y^k_t,a_t,h^c_t)  \mu^c_{t}(h^c_{t})(x'^{k}_t)} 
\nonumber\\
=&\sum_{x^k_{t}\in\mathcal{X}^k_{t}} \frac{p^k_t(x^k_{t+1};x^k_t,a_t) \eta^k_t(x^k_{t},y^k_t,a_t,h^c_t)\mu^c_{t}(h^c_{t})(x^{k}_{t})}
	{\sum_{x'^k_{t} \in \mathcal{X}^k_t}\eta^k_t(x'^k_{t},y^k_t,a_t,h^c_t)  \mu^c_{t}(h^c_{t})(x'^{k}_t)} 
\nonumber\\
\stackrel{(a)}{=}&\sum_{x^k_{t}\in\mathcal{X}^k_{t}} \frac{p^k_t(x^k_{t+1};x^k_t,a_t) \eta^k_t(x^k_{t},y^k_t,a_t,h^c_t)\gamma_{\psi,t}(h^c_{t})(x^{k}_{t})}
	{\sum_{x'^k_{t} \in \mathcal{X}^k_t}\eta^k_t(x'^k_{t},y^k_t,a_t,h^c_t)  \gamma_{\psi,t}(h^c_{t})(x'^{k}_t)} 
\nonumber\\
=&\gamma_{\psi,t+1}(h^c_{t+1})(x^{k}_{t+1}),
\label{eq:muct_step}
\end{align}
where (a) in \eqref{eq:muct_step} follows from the induction hypothesis for \eqref{eq:mut} at time $t$, and the last equality in \eqref{eq:muct_step} is true because of \eqref{eq:tilpsiS} ($\psi$ is consistent with $\lambda$).

Therefore, \eqref{eq:muct} is true at time $t+1$ from \eqref{eq:muct_step1} and \eqref{eq:muct_step}.

To show consistency at time $t+1$, we need to show that Bayes' rule, given by \eqref{eq:consist_positive}, is satisfied when the denominator of \eqref{eq:consist_positive} is non-zero, and \eqref{eq:consist_sigfree} holds for any histories $h^n_{t+1}$.

We first note that, the construction \eqref{eq:mutupdatec_nonzero}, \eqref{eq:mutplusupdatec_zero} and the definition of signaling-free belief $\hat \mu^c_{t+1}$ ensure that
\begin{align}
&\mu^c_{t+1}(h^c_{t+1})(x^k_{1:t+1}) = 0 \text{ if } \hat \mu^c_{t+1}(h^c_{t+1})(x^k_{1:t+1}) =0.
\label{eq:sfbeliefconsistent}
\end{align}
Therefore, \eqref{eq:consist_sigfree} follows from \eqref{eq:sfbeliefconsistent} since the signaling-free belief satisfies 
\begin{align}
&\hat\mu^n_{t+1}(h^{n}_{t+1})(x_{1:t+1}) \nonumber\\
= &\mathbf{1}_{\{x^n_{1:t+1}\}}(h^n_{t+1})\prod_{k\neq n}\hat\mu^c_{t}(h^{c}_{t+1})(x^k_{1:t+1})
\end{align}
which follows by an argument similar to that of Lemma \ref{lm:SFupdate}.

Now consider \eqref{eq:consist_positive} at $t+1$ when the denominator is non-zero.
From \eqref{eq:mutupdate1}-\eqref{eq:mutupdatec_nonzero} the left hand side of \eqref{eq:consist_positive} equals to
\begin{align}
&\mu^n_{t+1}(h^n_{t+1})(x_{1:t+1}) 
=  \mathbf{1}_{\{x^n_{1:t+1}\}}(h^n_{t+1}) \nonumber\\
&\quad \prod_{k\neq n}
\frac{p^k_t(x^k_{t+1};x^k_t,a_t) \eta^k_t(x^k_{t},y^k_t,a_t,h^c_t)\mu^c_{t}(h^c_{t})(x^{k}_{1:t})}
	{\sum_{x'^k_{t} \in \mathcal{X}^k_t}\eta^k_t(x'^k_{t},y^k_t,a_t,h^c_t)  \mu^c_t(h^c_{t})(x'^{k}_t)} .
\label{eq:BRntL}
\end{align}

On the other hand, the numerator of the right hand side of \eqref{eq:consist_positive} is equal to
\begin{align}
 &\mathbb{P}^{g_t}_{\mu}(x_{1:t+1},y_t,a_t|h^n_t,a^n_t) \nonumber\\
= &\mu^n_t(h^{n}_{t})(x_{1:t})\prod_{k\in\mathcal{N}} p^k_t(x^k_{t+1};x^k_t,a_t)q^k_t(y^k_{t};x^k_t,a_t) \nonumber\\
   &\quad\prod_{k\neq n}\lambda_t^{k}(x^k_t,c_t,\gamma_{\psi,t}(h^c_t),\hat \gamma_t(h^c_t))(a^k_t) \nonumber\\
= &\mathbf{1}_{\{x^n_{1:t}\}}(h^n_{t})\prod_{k\in\mathcal{N}} p^k_t(x^k_{t+1};x^k_t,a_t)q^k_t(y^k_{t};x^k_t,a_t) \nonumber\\
   &\quad\prod_{k\neq n} \mu^c_t(h^{c}_{t})(x^k_{1:t})\prod_{k\neq n}\lambda_t^{k}(x^k_t,c_t,\gamma_{\psi,t}(h^c_t),\hat \gamma_t(h^c_t))(a^k_t) 
\nonumber\\
= &\mathbf{1}_{\{x^n_{1:t}\}}(h^n_{t}) p^n_t(x^n_{t+1};x^n_t,a_t)q^n_t(y^n_{t};x^n_t,a_t) \nonumber\\
   &\quad\prod_{k\neq n}p^k_t(x^k_{t+1};x^k_t,a_t) \eta^k_t(x^k_{t},y^k_t,a_t,h^c_t)\mu^c_{t}(h^c_{t})(x^{k}_{1:t})
\label{eq:consistencyinproof}
\end{align}
The first equality in \eqref{eq:consistencyinproof} follows from \eqref{eq:conditionalupdate} and \eqref{eq:getaS}. The second equality in \eqref{eq:consistencyinproof} follows from the induction hypothesis for \eqref{eq:mut}. The last equality in \eqref{eq:consistencyinproof} follows from \eqref{eq:etainproof}.


Substituting \eqref{eq:consistencyinproof} back into both the numerator and the denominator in the right hand side of \eqref{eq:consist_positive}, we obtain \eqref{eq:BRntL}. Therefore, \eqref{eq:consist_positive} is satisfied for any history $h^n_{t+1} \in \mathcal{H}^n_{t+1}$ for any $n\in\mathcal{N}$ when the denominator of \eqref{eq:consist_positive} is non-zero, hence, $(g,\mu)$ is consistence before time $t+1$. This completes the induction step and the proof of the lemma.


\end{proof}

\begin{proof}[Proof of Lemma \ref{lm:condindep}]
If agent $n$ uses an arbitrary strategy $g'^n$, following the same construction \eqref{eq:mutupdate1}-\eqref{eq:mutupdatec_nonzero} in Lemma \ref{lm:befupdate}, we can obtain a belief system $\mu'$ from $g' := (g'^n,g^{-n})$ and $\psi$ such that
\begin{align}
\mu'^n_t(h^n_t)(x_{1:t}) = \mathbf{1}_{\{x^n_{1:t}\}}(h^{n}_t) \prod_{k \neq n} \mu'^c_t(h^c_t)(x^{k}_{1:t}).
\end{align}
Since $\mu'^c_t(h^c_t)(x^{k}_{1:t})$ defined by \eqref{eq:mutupdatec_nonzero} and \eqref{eq:mutplusupdatec_zero} depends only on the strategies $g'^{-n}=g^{-n}$ of all agents other than $n$, we have for all $h^c_t\in\mathcal{H}^c_t$
\begin{align}
\mu'^c_t(h^c_t)(x^{k}_{1:t}) = \mu^c_t(h^c_t)(x^{k}_{1:t}).
\end{align}
Therefore, for any history $H^n_t\in \mathcal{H}^n_t$
\begin{align}
 \mu'^n_t(h^n_t)(x_{1:t}) = \mathbf{1}_{\{x^n_{1:t}\}}(h^{n}_t) \prod_{k \neq n} \mu^c_t(h^c_t)(x^{k}_{1:t}).
\end{align}
The same argument for the proof of consistency in Lemma \ref{lm:befupdate} shows that $\mu'$ is consistent with $g'=( g'^n,g^{-n})$. Therefore,
when $\mathbb{P}^{g'^n,g^{-n}}(h^n_t)>0$, from Bayes' rule we have
\begin{align}
\mathbb{P}^{g'^n,g^{-n}}(x_{1:t}|h^n_t) =
&\mathbb{P}^{g'^n,g^{-n}}_{\mu'}(x_{1:t}|h^n_t) =
\mu'^n_t(h^n_t)(x_{1:t})  \nonumber\\
= &\mathbf{1}_{\{x^n_{1:t}\}}(h^{n}_t) \prod_{k \neq n} \mu^c_t(h^c_t)(x^{k}_{1:t}).
\end{align}
\end{proof}

\begin{proof}[Proof of Lemma \ref{lm:Commonclose}]
To simply the notation, we use $\Pi_t$ to denote $\Pi^{\gamma_\psi}_t$ and $B_t = (C_t,\Pi_t,\hat\Pi_t)$.

Let $(g,\mu)=f(\lambda,\psi)$ as in Lemma \ref{lm:befupdate}.
Suppose every agent $k\neq n$ uses the strategy $g^{k}$ along with the belief system $\mu$.

Below, we show that agent $n$'s best response problem \eqref{eq:seqrational} is a Markov Decision Process (MDP) with state process $\{(X^n_t,B_t), t\in\mathcal{T}\}$ and action process $\{A^n_t,t\in\mathcal{T}\}$.

Since the strategies $g^{-n}$ of all other agents are fixed, when agent $n$ selects an action $a^n_t\in\mathcal{A}^n_t$ at time $t\in\mathcal{T}$, agent $n$'s expected instantaneous utility at $h^n_t \in \mathcal{H}^n_t$ under $\mu$ is given by
\begin{align}
&\mathbb{E}^{g^{-n}}_{\mu}\left[
\phi^{n}_t(C_t,X_t,A_t)|h^n_t,a^n_t\right].
 \label{eq:inst_utilityforn}
\end{align}
Since $A^k_t, k\neq n$ satisfies \eqref{eq:getaS}, 
the distribution of $A^k_t$ only depends on $X^k_t$ and $B_t$.
Therefore, the distribution of $A^{-n}_t$ only depends on $X^{-n}_t$ and $B_t$. 
Then, for any realization $x^{-n}_t \in \mathcal{X}^{-n}_t$, $h^n_t=(x^{n}_{1:t},h^c_t) \in\mathcal{H}^{n}_t$ and $a^n_t \in \mathcal{A}^n_t$,
\begin{align}
&\mathbb{E}^{g^{-n}}_{\mu}\left[
\phi^{n}_t(C_t,X_t,A_t)| x^{-n}_t,a^n_t, h^n_t\right]
\nonumber\\
=&\mathbb{E}^{g^{-n}}_{\mu}\!\!\left[
\phi^{n}_t(c_t,x_t,(a^n_t,A^{-n}_t))| x^{-n}_t,x^{n}_{1:t},a^n_t,h^c_t,b_t\right]
\nonumber\\
=&\mathbb{E}^{g^{-n}_t}\left[
\phi^{n}_t(c_t,x_t,(a^n_t,A^{-n}_t))| x^{-n}_t,b_t\right]
\nonumber\\
=:&\bar\phi^{n}_t(x_t,a^n_t,b_t,g^{-n}_t);
\label{eq:tildephiS}
\end{align}
the first equality in \eqref{eq:tildephiS} holds because given $\psi$, $B_t = (C_t,\gamma_{\psi,t}(H^c_t),\hat\gamma_t(H^c_t))$ is a function of $H^c_t$;
the second equality in \eqref{eq:tildephiS} is true because the distribution of $A^{-n}_t$ depends only on $X^{-n}_t$, $B_t$ and the strategy $g^{-n}_t$.
From \eqref{eq:tildephiS}, agent $n$'s instantaneous utility \eqref{eq:inst_utilityforn} can be written as
\begin{align}
&\mathbb{E}^{g^{-n}}_{\mu}\left[
\phi^{n}_t(C_t,X_t,A_t)|h^n_t,a^n_t\right]
\nonumber\\
= &\mathbb{E}^{g^{-n}}_{\mu}\left[\mathbb{E}^{g^{-n}}_{\mu}\left[
\phi^{n}_t(C_t,X_t,A_t)| X^{-n}_t,a^n_t, h^n_t\right]|h^n_t,a^n_t\right]
\nonumber\\
= &\mathbb{E}^{g^{-n}}_{\mu}\left[\bar\phi^{n}_t((x^n_t,X^{-n}_t),a^n_t,b_t,g^{-n}_t) |h^c_t,x^n_{1:t},a^n_t\right]
\nonumber\\
\stackrel{(a)}{=}&\mathbb{E}_{\mu}\left[\bar\phi^{n}_t((X^{-n}_t,x^n_t),a^n_t,b_{t},g^{-n}_t) |h^c_t\right]
\nonumber\\
\stackrel{(b)}{=}&\mathbb{E}_{\pi_t}\left[\bar\phi^{n}_t((X^{-n}_t,x^n_t),a^n_t,b_{t},g^{-n}_t)\right]
\nonumber\\
=: & \tilde{\phi}^{n}_t (x^n_t,b_t,a^n_t,g^{-n}_t).
 \label{eq:utilityforn}
\end{align}
Equation (a) is true because, from Lemma \ref{lm:condindep},
$X^{-n}_t$ and $X^{n}_t$ are independent conditional on $h^c_t$.
Equation (b) follows from the fact that $\pi_t$ is the distribution of $X_t$ conditional on $h^c_t$ under $\mu$, which is established by \eqref{eq:muct} in Lemma \ref{lm:befupdate}.

Next, we show that the process $\{(X^n_t,B_t), t\in\mathcal{T}\}$ is a controlled Markov chain with respect to the action process $\{A^n_t,t\in\mathcal{T}\}$ for agent $n$.

From the system dynamics and the belief evolution \eqref{eq:psiB}, we know that $(X^n_{t+1},B_{t+1})$ is a function of 
$\{X^n_t,Y^{-n}_t,A^n_t,A^{-n}_t,B_t,W_t\}$ where $W_t$ denotes all the noises at time $t$.
Furthermore, the distribution of $(Y^k_t,A^k_t)$ depends only on $\{X^k_t,B_t,W_t,g^k_t\}$ for any $k\neq n$.
Therefore,
\begin{align}
(X^n_{t+1},B_{t+1}) =
\tilde f_t(X^n_t,X^{-n}_t,A^n_t,B_t,W_t,g^{-n}_t).
\label{eq:distridepend}
\end{align}
Suppose agent $n$ uses an arbitrary strategy $\tilde g^n$.
Then, for any realizations  $x^n_{t+1}\in \mathcal{X}^n_{t+1}$, $b_{t+1}=(c_{t+1},\pi_{t+1},\hat\pi_{t+1}) \in \mathcal{B}_{t+1}$, $h^n_t=(x^{n}_{1:t},h^c_t) \in\mathcal{H}^{n}_t$ and $a^n_t \in \mathcal{A}^n_t$, we obtain
\begin{align}
 &\mathbb{P}^{\tilde g^n,g^{-n}}_{\mu}(x^n_{t+1},b_{t+1}|h^n_t,a^n_t ) \nonumber\\
= & \sum_{\mathclap{x^{-n}_t \in \mathcal{X}^{-n}_t}}\mathbb{P}^{\tilde g^n,g^{-n}}_{\mu}(x^n_{t+1},b_{t+1}|x^{-n}_t,h^n_t,a^n_t )
\mathbb{P}^{\tilde g^n,g^{-n}}_{\mu}(x^{-n}_t|h^n_t,a^n_t )
 \nonumber\\
 = & \sum_{\mathclap{x^{-n}_t \in \mathcal{X}^{-n}_t}}\mathbb{P}^{\tilde g^n,g^{-n}}_{\mu}(x^n_{t+1},b_{t+1}|x^{-n}_t,h^n_t,a^n_t )
\pi_t(x^{-n}_t)
 \nonumber\\
 = & \sum_{\mathclap{x^{-n}_t \in \mathcal{X}^{-n}_t}}\mathbb{P}^{g^{-n}_t}(x^n_{t+1},b_{t+1}|x^n_t,x^{-n}_t,b_t,a^n_t )
\pi_t(x^{-n}_t)
 \nonumber\\
 = & \mathbb{P}^{g^{-n}}_{\mu}(x^n_{t+1},b_{t+1}|x^n_t,b_t,a^n_t).
\label{eq:conMC}
\end{align}
The second equality in \eqref{eq:conMC} follows from Lemma \ref{lm:condindep} and \eqref{eq:muct} in Lemma \ref{lm:befupdate}.
The third equality in \eqref{eq:conMC} follows from \eqref{eq:distridepend}.
The last equality follows from the same arguments as the first through third equalities.

Equation \eqref{eq:conMC} shows that the process $\{(X^n_t,B_t), t\in\mathcal{T}\}$ is a controlled Markov Chain with respect to 
the action process $\{A^n_t,t\in\mathcal{T}\}$ for agent $n$.
This process along with the instantaneous utility  \eqref{eq:utilityforn} define a MDP.
From the theory of MDP (see \cite[Chap. 6]{kumar1986stochastic}), there is an optimal strategy of agent $n$ that is of the form
\begin{align}
\lambda'^{n}_t(x^n_t,b_t) = \lambda'^{n}_t(x^n_t,(c_t,\gamma_{\psi,t}(h^c_t),\hat\gamma_t(h^c_t)))
\end{align}
for all $h^n_t=(x^n_{1:t},h^c_t)\in\mathcal{H}^n_t$ for all $t\in\mathcal{T}$. 
This completes the proof of Lemma \ref{lm:Commonclose}.

\end{proof}

\begin{proof}[Proof of Theorem \ref{thm:commonDP}]
Suppose $(\lambda^*,\psi^*)$ solves the dynamic program defined by \eqref{eq:DP:end}-\eqref{eq:DP:valueupdate}. Let $V^{n}_t,n\in\mathcal{N},t\in\mathcal{T},$ denote the value functions computed by \eqref{eq:DP:end} and \eqref{eq:DP:valueupdate} from $(\lambda^*,\psi^*)$.
Then $\psi^*$ is consistent with $\lambda^*$ from \eqref{eq:DP:consistent}. 

Let $(g^*,\mu^*)=f(\lambda^*,\psi^*)$ defined by Lemma \ref{lm:befupdate}.
Then $\mu^*$ is consistent with $g^*$ because of Lemma \ref{lm:befupdate}. Furthermore, for all $n\in\mathcal{N}, t\in\mathcal{T},$
 $V^{n}_t(x^n_t,b_t)$ (where $b_t=(c_t,\gamma_{\psi^*,t}(h^c_t), \hat\gamma_t(h^c_t))$) is agent $n$'s expected continuation utility from time $t$ on under $\mu^*$ at $h^n_t=(x^n_{1:t},h^c_t)$ when agent $n$ uses $g^{*n}$ and all other agents use $g^{*-n}$.

If every agent $k \neq n$ uses the strategy $g^{*k}$, 
from Lemma \ref{lm:Commonclose} we know that there is a best response $g'^{n}$, under the belief system $\mu^*$, of agent $n$ such that for all $t\in\mathcal{T}$
\begin{align}
&g'^n_t(h^n_t) = \lambda'^n_t(x^n_t,b_t)
\label{eq:gtildeS}
\end{align}
for some \CI\ strategy $ \lambda'^n_t$ for all $h^n_t=(h^c_t,x^n_{1:t})$.
Define a \CI\ strategy profile $\lambda' := (\lambda'^{n}, \lambda^{*-n})$.

Let $V'^n_{t}, n\in\mathcal{N},t \in\mathcal{T},$ be the functions generated by \eqref{eq:DP:end} and \eqref{eq:DP:valueupdate} from $(\lambda',\psi^*)$.
Then $V'^{n}_t(x^n_t,b_t)$ (where $b_t=(c_t,\gamma_{\psi^*,t}(h^c_t), \hat\gamma_t(h^c_t))$) is agent $n$'s expected continuation utility from time $t$ on under $\mu^*$ at $h^n_t=(x^n_{1:t},h^c_t)$ when agent $n$ uses $g'^{n}$ and all other agents use $g^{*-n}$.
Since $g'^{n}$ is a best response, for all $n\in\mathcal{N},t\in\mathcal{T}$ and $h^n_t=(x^n_{1:t},h^c_t)\in\mathcal{H}^n_t$ we must have
\begin{align}
V'^{n}_t( x^n_t,b_t) \geq  V^{n}_t( x^n_t,b_t).
\label{eq:valuecomp1}
\end{align}

On the other hand, $V'^{n}_t( x^n_t,b_t)$ is player $n$'s expected utility in stage game $G_{t}(V_{t+1},\psi^*_t,b_t)$
when player $n$ uses $\lambda'^{n}|_{b_t}$, and other players use $\lambda^{*-n}|_{b_t}$.
However, from \eqref{eq:DP:BNE}, $V^{n}_t(x^n_t,b_t)$ is player $n$'s maximum expected utility in stage game $G_{t}(V_{t+1},\psi^*_t,b_t)$ when other players use $\lambda^{*-n}|_{b_t}$ because the strategy $\lambda^{*n}_t|_{b_t}$ is a best response for player $n$ in the stage game. This means that for all $n\in\mathcal{N},t\in\mathcal{T}$ and
$b_t\in\mathcal{B}_t$
\begin{align}
V^{n}_t( x^n_t,b_t) \geq  V'^{n}_t( x^n_t,b_t).
\label{eq:valuecomp2}
\end{align}
Combining \eqref{eq:valuecomp1} and \eqref{eq:valuecomp1} we get
\begin{align}
V^{n}_t( x^n_t,b_t) =  V'^{n}_t( x^n_t,b_t).
\label{eq:valucomp}
\end{align}
Equation \eqref{eq:valucomp} implies that, at any time $t$, the strategy $g^{*n}_{t:T}$ gives agent $n$ the maximum expected continuation utility from time $t$ on under $\mu^*$. This complete the proof that $(g^*,\mu^*)$ is a PBE.
As a result, the pair $(\lambda^*,\psi^*)$ forms a \CI-PBE of the dynamic game described in Section \ref{sec:model}.

\end{proof}

\section{}
\label{app:existence}

In order to prove Theorem \ref{thm:publiccommon}, we first prove the following lemma.
\begin{lemma}
\label{lm:SFbefACind}
In \textbf{Game M}
\begin{align}
\hat\pi^n_{t+1} = \hat\psi^n_t(y^n_t,\hat\pi_{t}). 
\end{align}
\end{lemma}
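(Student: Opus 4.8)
The plan is to derive the claim directly from Lemma \ref{lm:SFupdate}, specializing its general signaling-free update rule to the structure of \textbf{Game M}. Lemma \ref{lm:SFupdate} already shows that, in the general model, the marginal signaling-free beliefs evolve as $\hat\Pi^n_{t+1} = \hat\psi^n_t(Y^n_t, A_t, \hat\Pi^n_t)$ with
\begin{align}
\hat\psi^n_t(y^n_t,a_t,\hat\pi^n_t)(x^n_{t+1}) = \frac{\sum_{x^n_t} p^n_t(x^n_{t+1};x^n_t,a_t)\, q^n_t(y^n_t;x^n_t,a_t)\, \hat\pi^n_t(x^n_t)}{\sum_{x'^n_t} q^n_t(y^n_t;x'^n_t,a_t)\, \hat\pi^n_t(x'^n_t)}. \nonumber
\end{align}
Hence the only task is to show that, in \textbf{Game M}, the right-hand side is independent of $a_t$, so that the action argument can be dropped and $\hat\psi^n_t$ rewritten as a map of $(y^n_t,\hat\pi_t)$.

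First I would observe that the transition and observation kernels of \textbf{Game M} do not depend on the agents' actions. Indeed, the dynamics ($X^n_{t+1} = f^n_{t_k}(X^n_{t_k}, W^{n,X}_{t_k})$ when $t = t_k$, and $X^n_{t+1} = X^n_t$ otherwise) and the observations ($Y^n_t = h^n_{t_k}(X^n_{t_k}, W^{n,Y}_{t_k})$ when $t = t_k$, empty otherwise) are functions of $X^n_t$ and the primitive noises alone, with no dependence on $A_t$. Consequently the induced kernels satisfy $p^n_t(x^n_{t+1};x^n_t,a_t) = p^n_t(x^n_{t+1};x^n_t)$ and $q^n_t(y^n_t;x^n_t,a_t) = q^n_t(y^n_t;x^n_t)$ for every $a_t \in \mathcal{A}_t$. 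Substituting these action-free kernels into the displayed expression, every occurrence of $a_t$ cancels, so $\hat\psi^n_t(y^n_t, a_t, \hat\pi^n_t)$ is constant in $a_t$, and I would rename the resulting map accordingly.

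Finally, since the primitive states are mutually independent and Lemma \ref{lm:SFupdate} gives $\hat\Pi_t = \prod_{n} \hat\Pi^n_t$ at every $t$, the joint signaling-free belief $\hat\pi_t$ determines each marginal $\hat\pi^n_t$; this justifies writing the remaining argument as $\hat\pi_t$ rather than $\hat\pi^n_t$, yielding $\hat\pi^n_{t+1} = \hat\psi^n_t(y^n_t, \hat\pi_t)$ as claimed. No genuine obstacle arises in this argument; the only point requiring care is the bookkeeping across the two regimes $t = t_k$ and $t \neq t_k$ when verifying that both kernels are action-independent, since the identity-dynamics/empty-observation case must be checked separately from the evolving case $t=t_k$.
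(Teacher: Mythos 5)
Your proposal is correct and follows essentially the same route as the paper's proof: both specialize the update formula of Lemma \ref{lm:SFupdate} to \textbf{Game M} and observe that the kernels satisfy $p^n_t(x^n_{t+1};x^n_t,a_t)=p^n_t(x^n_{t+1};x^n_t)$ and $q^n_t(y^n_t;x^n_t,a_t)=q^n_t(y^n_t;x^n_t)$, so the dependence on $a_t$ disappears. Your additional remarks on checking the two regimes $t=t_k$ versus $t\neq t_k$ and on passing from $\hat\pi^n_t$ to $\hat\pi_t$ are fine but not needed beyond what the paper states.
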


\begin{proof}[Proof of Lemma \ref{lm:SFbefACind}]
From Lemma \ref{lm:SFupdate}
\begin{align}
\hspace{-0.5em}\hat\pi^n_{t+1}= &\hat\psi^n_t(y^n_t,a_t,\hat \pi_{t})(x^n_{t+1})  \nonumber\\
=&\frac{\sum_{x^n_{t} \in \mathcal{X}^n_t}p^n_t(x^n_{t+1};x^n_t,a_t) q^n_t(y^n_{t};x^n_t,a_t) \hat\pi^{n}_{t}(x^n_{t})}
	{\sum_{x'^n_{t} \in \mathcal{X}^n_t}q^n_t(y^n_{t};x'^n_t,a_t)\hat\pi^{n}_{t}(x'^n_{t})} .
\end{align}
Since $p^n_t(x^n_{t+1};x^n_t,a_t) = p^n_t(x^n_{t+1};x^n_t)$ and $q^n_t(y^n_{t};x^n_t,a_t)=q^n_t(y^n_{t};x^n_t)$ in \textbf{Game M}, the assertion of the lemma holds.
\end{proof}

Lemma \ref{lm:SFbefACind} shows that in \textbf{Game M} the signaling-free beliefs do not depend on the actions.

We now prove Theorem \ref{thm:publiccommon}.

\begin{proof}[Proof of Theorem \ref{thm:publiccommon}]
Consider a \CI\ update rule $\psi^*$ given by
\begin{align}
&\psi^{n*}_t(y^n_t,a_t,b_t) = \hat\psi^n_t(y^n_t,\hat\pi_{t}). 
\label{psistarS}
\end{align}
Based on $\psi^*$ defined by \eqref{psistarS}, we solve the dynamic program defined by \eqref{eq:DP:end}-\eqref{eq:DP:valueupdate} to get a \CI\ strategy profile $\lambda^*$ and show that $(\lambda^*,\psi^*)$ forms a \CI-PBE for \textbf{Game M}.
Note that under the update rule $\psi^*$ given by \eqref{psistarS}, we have for any $n$ and $t$
\begin{align}
\Pi^n_{t} = \hat\Pi^n_{t}
\end{align}
Therefore, in the following we will replace $\Pi^n_{t}$ by $\hat\Pi^n_{t}$ and drop $\Pi^n_{t}$ if both $\Pi^n_{t}$ and $\hat\Pi^n_{t}$ are present.

The dynamic program for \textbf{Game M} can be solved by induction. We prove the following claim:

At any time $t$, there exists a \CI\ strategy $\lambda^*_{t}$ that satisfies \eqref{eq:DP:BNE}, and the value functions $V^n_{t}, n\in \mathcal{N},$ generated by \eqref{eq:DP:end} and \eqref{eq:DP:valueupdate} from $(\lambda^*_{t:T},\psi^*_{t:T})$ satisfy
\begin{align}
V^n_t(x^n_t,b_t) = \tilde U^n_t(c_t,\hat\pi_t) + \tilde V^n_t(x^n_t,c_{t_{k}+1},\hat\pi_t)
\label{eq:valuedecomp}
\end{align}
for some functions $\tilde U^n_t(c_t,\hat\pi_t)$ and $\tilde V^n_t(x^n_t,c_{t_{k}+1},\hat\pi_t)$ when $t_k+1\leq t\leq t_{k+1}$ for some $t_k\in\overline{\mathcal{T}}$. 

The above claim holds at $t=T+1$ since $V^n_{T+1}=0, n\in\mathcal{N}$.
 
Suppose the claim is true at $t+1$. 

At time $ t_k+1 \leq t < t_{k+1}$ for some $t_k \in \overline{\mathcal{T}}$, $X^n_{t+1}=X^n_t$, $Y_t=$ empty, and $C_{t+1}=(C_{t},A_t)=(C_{t_k+1},A_{t_k+1:t})$. Then because of \eqref{eq:G2U}, \eqref{psistarS} and the induction hypothesis for \eqref{eq:valuedecomp}, player $n$'s utility in stage game $G_t(V_{t+1},\psi^*_t,b_t)$ is equal to
\begin{align}
&U^n_{G_t(V_{t+1},\psi^*_t,b_t)} \nonumber\\
=&\phi^{n}_t(c_t,X^{-n}_t,A_t) +V^{n}_{t+1}(X^n_{t+1},C_{t+1},\hat\psi_t(\hat\pi_{t},Y_t))
\nonumber\\
=&\phi^{n}_t(c_t,X^{-n}_t,A_t) +\tilde U^n_{t+1}((c_t,A_t),\hat\psi_t(\hat\pi_{t})) \nonumber\\
 & + \tilde V^n_{t+1}(X^n_{t},c_{t_k+1},\hat\psi_t(\hat\pi_{t}))
\label{eq:utilityinpf_CA1}
\end{align}
for any $b_t\in\mathcal{B}_t$ and $n\in\mathcal{N}$.
Define
\begin{align}
&\tilde \phi^n_t(X^{-n}_t,A_t,b_t)\nonumber\\
:= &\phi^{n}_t(c_t,X^{-n}_t,A_t) +\tilde U^n_{t+1}((c_t,A_t),\hat\psi_t(\hat\pi_{t})), 
\label{eq:utilityinpf_CA1_1}
\\
&\tilde V^n_t(X^n_t,c_{t_k+1},\hat\pi_t):= \tilde V^n_{t+1}(X^n_{t},c_{t_k+1},\hat\psi_t(\hat\pi_{t})).
\label{eq:utilityinpf_CA1_2}
\end{align}

At $t = t_{k}$ for some $t_{k} \in \overline{\mathcal{T}}$,  $X^n_{t_k+1}=f_{t_k}^n(X_{t_k}^n,W_{t_k}^{n,X})$, $Y^n_{t_k}=h_{t_k}^n(X_{t_k}^n,W_{t_k}^{n,Y}) $ and $C_{t_k+1}=f_{t_k}^c(C_{t_{k-1}+1},W_{t_k}^{C})$. Then because of \eqref{eq:G2U}, \eqref{psistarS} and the induction hypothesis for \eqref{eq:valuedecomp}, player $n$'s utility in stage game $G_{t_k}(V_{{t_k}+1},\psi^*_{t_k},b_{t_k})$ is equal to
\begin{align}
&U^n_{G_{t_k}(V_{{t_k}+1},\psi^*_{t_k},b_{t_k})} \nonumber\\
=&\phi^{n}_{t_k}(c_{t_k},X^{-n}_{t_k},A_{t_k}) +V^{n}_{{t_k}+1}(X^n_{{t_k}+1},C_{{t_k}+1},\hat\psi_t(\hat\pi_{t_k},Y_{t_k}))
\nonumber\\
=&\phi^{n}_{t_k}(c_{t_k},X^{-n}_{t_k},A_{t_k})\nonumber\\
+&\tilde U^n_{t+1}(f_{t_k}^c(c_{t_{k-1}+1},W_{t_k}^{C}),\hat\psi_t(\hat\pi_{t},h_{t_k}(X_{t_k},W_{t_k}^{Y}))) 
\nonumber\\
+& \tilde V^n_{t+1}(f_{t_k}^n(X_{t_k}^n,W_{t_k}^{n,X}),f_{t_k}^c(c_{t_{k-1}+1},W_{t_k}^{C}),\hat\psi_t(\hat\pi_{t}))
\label{eq:utilityinpf_CA2}
\end{align}
for any $b_t\in\mathcal{B}_{t_k}$ and $n\in\mathcal{N}$.
Define
\begin{align}
&\tilde \phi^n_{t_k}(c_{t_k},X^{-n}_{t_k},A_{t_k},\hat\pi_{t_k}):=\phi^{n}_{t_k}(c_{t_k},X^{-n}_{t_k},A_{t_k}), 
\label{eq:utilityinpf_CA2_1}
\\
&\tilde V^n_{t_k}(X^n_{t_k},c_{t_{k-1}+1},\hat\pi_{t_k})\nonumber\\
:=& 
\mathbb{E}_{\hat\pi_{t_k}}\left[
\tilde U^n_{t+1}(f_{t_k}^c(c_{t_{k-1}+1},W_{t_k}^{C}),\hat\psi_t(\hat\pi_{t},h_{t_k}(X_{t_k},W_{t_k}^{Y}))) \right.
\nonumber\\
+& \left.\tilde V^n_{t+1}(f_{t_k}^n(X_{t_k}^n,W_{t_k}^{n,X}),f_{t_k}^c(c_{t_{k-1}+1},W_{t_k}^{C}),\hat\psi_t(\hat\pi_{t_k}))| X_{t_k} \right].
\label{eq:utilityinpf_CA2_2}
\end{align}

Therefore, for any $t$, because of \eqref{eq:utilityinpf_CA1}-\eqref{eq:utilityinpf_CA2_2} player $n$'s expected utility conditional on $(X_t,A_t)$ in stage game $G_t(V_{t+1},\psi^*_t,b_t)$, for $b_t=(c_t,\hat\pi_t)\in\mathcal{B}_t$, is equal to
\begin{align}
&\mathbb{E}_{\hat\pi_t}\left[U^n_{G_{t}(V_{{t}+1},\psi^*_{t},b_t)}|X_t,A_t\right] \nonumber\\
=&\tilde \phi^n_t(c_t,X^{-n}_t,A_t,\hat\pi_t)+\tilde V^n_t(X^n_t,c_{t_k+1},\hat\pi_t).
\label{eq:utilityinpf}
\end{align}
 when $t_k+1\leq t \leq t_{k+1}$ for $t_k\in\overline{\mathcal{T}}$.

Since the second term in \eqref{eq:utilityinpf} does not depend on the players' strategies, an equilibrium of the stage game $G_t(V_{t+1},\psi^*_t,b_t)$ is also an equilibrium of the game $G'_t(V_{t+1},\psi^*_t,b_t)$ where each player $n\in\mathcal{N}$ has utility
\begin{align}
U^n_{G'_t(V_{t+1},\psi^*_t,b_t)}:=&\tilde \phi^n_t(c_t,X^{-n}_t,A_t,\hat\pi_t).
\label{eq:utilityinpf_simplified}
\end{align}

For any $b_t\in\mathcal{B}_t$, since $\mathcal{A}^n_t$ is a finite set for any $n\in N$ in \textbf{Game M},  the game
$G'_t(V_{t+1},\psi^*_t,b_t)$ has at least one Bayesian Nash equilibrium $\tilde\lambda^*_t(b_t)=\{\tilde\lambda^{*n}_t(b_t) \in \Delta(\mathcal{A}^n_t),n\in\mathcal{N}\}$ (see \cite{fudenberg1991game,osborne1994course}).
Define $\lambda^{*n}_t(x^n_t,b_t):=\tilde \lambda^{*n}_t(b_t)$ for all $x^n_t \in\mathcal{X}^n_t, n\in \mathcal{N}$.
Then, we get a \CI\ strategy $\lambda^{*}_t\in BNE_t(V_{t+1},\psi^*_t)$ so that \eqref{eq:DP:BNE} is satisfied at $t$. Moreover, from \eqref{eq:DP:valueupdate},
\begin{align}
&V^n_t(x^n_t,b_t) \nonumber\\
= &D^n_t(V_{t+1},\lambda^*_t,\psi^*_t)(x^n_t,b_t) \nonumber\\
= & \mathbb{E}^{\lambda^*_t}_{\hat\pi_t}\left[\tilde \phi^n_t(c_t,X^{-n}_t,A_t,\hat\pi_t)+\tilde V^n_t(X^n_t,c_{t_k+1},\hat\pi_t)|x^n_t\right] \nonumber\\
\stackrel{(a)}{=} &\mathbb{E}^{\lambda^*_t}_{\hat\pi_t}\left[\tilde \phi^n_t(c_t,X^{-n}_t,A_t,\hat\pi_t)\right]+ \tilde V^n_t(x^n_t,c_{t_{k}+1},\hat\pi_t) \nonumber\\
=: &\tilde U^n_t(c_t,\hat\pi_t) + \tilde V^n_t(x^n_t,c_{t_{k}+1},\hat\pi_t)
\end{align}
where (a) is true because $A^n_t$ only depends on $b_t$ using $\lambda^{*n}_t$, and $X^{-n}_t$ and $X^n_t$ are independent under $\hat\pi_t$.
Then \eqref{eq:valuedecomp} is satisfied at $t$, and the the proof of the claim is complete.



As a result of the claim, we obtain a \CI\ strategy profile $\lambda^*$ and a \CI\ update rule $\psi^*$ such that \eqref{eq:DP:end}, \eqref{eq:DP:BNE} and \eqref{eq:DP:valueupdate} are satisfied. It remains to show the consistency \eqref{eq:DP:consistent}.
Using the dynamics of \textbf{Game M} and the fact that $\lambda^{*n}_t(x^n_t,b_t):=\tilde \lambda^{*n}_t(b_t)$, we obtain
\begin{align}
&\frac{\sum_{x^n_{t} \in \mathcal{X}^n_t}p^n_t(x^n_{t+1};x^n_t,a_t) \eta^n_t(x^n_{t},y^n_t,a_t,b_t)\pi^n_t(x^n_t)}
	{\sum_{x'^n_{t} \in \mathcal{X}^n_t}\eta^n_t(x'^n_{t},y^n_t,a_t,b_t)\pi^n_t(x'^n_t)} 
\nonumber\\
=&\frac{\sum_{x^n_{t} \in \mathcal{X}^n_t}p^n_t(x^n_{t+1};x^n_t) q^n_t(y^n_{t};x^n_t)\tilde \lambda^{*n}_t(b_t)(a^n_t) \hat\pi^{n}_{t}(x^n_{t})}
	{\sum_{x'^n_{t} \in \mathcal{X}^n_t}q^n_t(y^n_{t};x'^n_t)\tilde \lambda^{*n}_t(b_t)(a^n_t) \hat\pi^{n}_{t}(x'^n_{t})} 
\nonumber\\
=&\frac{\sum_{x^n_{t} \in \mathcal{X}^n_t}p^n_t(x^n_{t+1};x^n_t) q^n_t(y^n_{t};x^n_t) \hat\pi^{n}_{t}(x^n_{t})}
	{\sum_{x'^n_{t} \in \mathcal{X}^n_t}q^n_t(y^n_{t};x'^n_t) \hat\pi^{n}_{t}(x'^n_{t})} 
\nonumber\\
=& \hat\psi^{n}_{t}(y^n_t,\hat\pi_t)(x^n_{t}) = \psi^{*n}_t(y^n_t,a_t,b_t)(x^n_{t}).
\end{align}
Thus, $\psi^*_t$ satisfies \eqref{eq:tilpsiS}, and $\psi^*_t$ is consistent with $\lambda^*_t$. Therefore \eqref{eq:DP:consistent} holds.

Since $(\lambda^*,\psi^*)$ solves the dynamic program defined by \eqref{eq:DP:end}-\eqref{eq:DP:valueupdate}, it is a \CI-PBE according to Theorem \ref{thm:commonDP}.
\end{proof}

\bibliographystyle{ieeetr}
\bibliography{games}

\begin{thebibliography}{10}

\bibitem{fudenberg1991game}
D.~Fudenberg and J.~Tirole, {\em Game theory. 1991}.
\newblock Cambridge, Massachusetts, 1991.

\bibitem{osborne1994course}
M.~J. Osborne and A.~Rubinstein, {\em A course in game theory}.
\newblock MIT press, 1994.

\bibitem{basar1995dynamic}
T.~Basar and G.~J. Olsder, {\em Dynamic noncooperative game theory}, vol.~200.
\newblock SIAM, 1995.

\bibitem{myerson2013game}
R.~B. Myerson, {\em Game theory: Analysis of Conflict}.
\newblock Harvard university press, 2013.

\bibitem{filar1996competitive}
J.~Filar and K.~Vrieze, {\em Competitive {Markov} decision processes}.
\newblock Springer-Verlag New York, Inc., 1996.

\bibitem{zamir1992repeated}
S.~Zamir, ``Repeated games of incomplete information: Zero-sum,'' {\em Handbook
  of Game Theory}, vol.~1, pp.~109--154, 1992.

\bibitem{forges1992repeated}
F.~Forges, ``Repeated games of incomplete information: non-zero-sum,'' {\em
  Handbook of Game Theory}, vol.~1, pp.~109--154, 1992.

\bibitem{aumann1995repeated}
R.~J. Aumann, M.~Maschler, and R.~E. Stearns, {\em Repeated games with
  incomplete information}.
\newblock MIT press, 1995.

\bibitem{mailath2006repeated}
G.~J. Mailath and L.~Samuelson, {\em Repeated games and reputations}, vol.~2.
\newblock Oxford university press Oxford, 2006.

\bibitem{jones2012policy}
L.~Li and J.~Shamma, ``Lp formulation of asymmetric zero-sum stochastic
  games,'' in {\em Proc. 53rd IEEE Conf. Decision and Control (CDC)},
  pp.~7752--7757, 2014.

\bibitem{renault2006value}
J.~Renault, ``The value of {Markov} chain games with lack of information on one
  side,'' {\em Math. Oper. Res.}, vol.~31, no.~3, pp.~490--512, 2006.

\bibitem{gensbittel2012value}
F.~Gensbittel and J.~Renault, ``The value of {Markov} chain games with
  incomplete information on both sides,'' {\em Math. Oper. Res.}, 2015.
\newblock (in print).

\bibitem{renault2012value}
J.~Renault, ``The value of repeated games with an informed controller,'' {\em
  Math. Oper. Res.}, vol.~37, no.~1, pp.~154--179, 2012.

\bibitem{cardaliaguet2015markov}
P.~Cardaliaguet, C.~Rainer, D.~Rosenberg, and N.~Vieille, ``Markov games with
  frequent actions and incomplete information-the limit case,'' {\em Math.
  Oper. Res.}, 2015.

\bibitem{nayyar2014Common}
A.~Nayyar, A.~Gupta, C.~Langbort, and T.~Basar, ``Common information based
  {Markov} perfect equilibria for stochastic games with asymmetric information:
  Finite games,'' {\em IEEE Trans. Autom. Control}, vol.~59, pp.~555--570,
  March 2014.

\bibitem{gupta2014common}
A.~Gupta, A.~Nayyar, C.~Langbort, and T.~Basar, ``Common information based
  {Markov} perfect equilibria for linear-gaussian games with asymmetric
  information,'' {\em SIAM J. Control Optim.}, vol.~52, no.~5, pp.~3228--3260,
  2014.

\bibitem{nayyar2013decentralized}
A.~Nayyar, A.~Mahajan, and D.~Teneketzis, ``Decentralized stochastic control
  with partial history sharing: A common information approach,'' {\em IEEE
  Trans. Autom. Control}, vol.~58, no.~7, pp.~1644--1658, 2013.

\bibitem{maskin2001markov}
E.~Maskin and J.~Tirole, ``Markov perfect equilibrium: I. observable actions,''
  {\em J. Econ. Theory}, vol.~100, no.~2, pp.~191--219, 2001.

\bibitem{kreps1994signalling}
D.~M. Kreps and J.~Sobel, ``Signalling,'' {\em Handbook of game theory with
  economic applications}, vol.~2, pp.~849--867, 1994.

\bibitem{ho1980team}
Y.-C. Ho, ``Team decision theory and information structures,'' {\em Proc.
  IEEE}, vol.~68, no.~6, pp.~644--654, 1980.

\bibitem{borgers2015introduction}
T.~Borgers, D.~Krahmer, and R.~Strausz, {\em An introduction to the theory of
  mechanism design}.
\newblock Oxford University Press, 2015.

\bibitem{crawford1982strategic}
V.~P. Crawford and J.~Sobel, ``Strategic information transmission,'' {\em
  Econometrica}, pp.~1431--1451, 1982.

\bibitem{nayyar2011sequential}
A.~Nayyar and D.~Teneketzis, ``Sequential problems in decentralized detection
  with communication,'' {\em IEEE Trans. Inf. Theory}, vol.~57, no.~8,
  pp.~5410--5435, 2011.

\bibitem{nayyar2015signaling}
A.~Nayyar and D.~Teneketzis, ``Signaling in sensor networks for sequential
  detection,'' {\em IEEE Trans. Control Netw. Syst.}, vol.~2, no.~1,
  pp.~36--46, 2015.

\bibitem{ouyang2015signaling}
Y.~Ouyang and D.~Teneketzis, ``Signaling for decentralized routing in a
  queueing network,'' {\em Ann. Oper. Res.}, pp.~1--39, 2015.

\bibitem{ouyang2015common}
Y.~Ouyang and D.~Teneketzis, ``A common information-based multiple access
  protocol achieving full throughput,'' in {\em Proc. 2015 IEEE Int. Symp. Inf.
  Theory (ISIT)}, pp.~31--35, 2015.

\bibitem{witsenhausen1975intrinsic}
H.~Witsenhausen, ``The intrinsic model for discrete stochastic control: Some
  open problems,'' in {\em Control Theory, Numerical Methods and Computer
  Systems Modelling}, pp.~322--335, Springer, 1975.

\bibitem{witsenhausen1988equivalent}
H.~S. Witsenhausen, ``Equivalent stochastic control problems,'' {\em Math.
  Control Signal}, vol.~1, no.~1, pp.~3--11, 1988.

\bibitem{ouyang2015dynamic}
Y.~Ouyang, H.~Tavafoghi, and D.~Teneketzis, ``Dynamic oligopoly games with
  private {Markovian} dynamics,'' in {\em Proc. 54th IEEE Conf. Decision and
  Control (CDC)}, 2015.
\newblock to appear.

\bibitem{kumar1986stochastic}
P.~Kumar and P.~Varaiya, {\em Stochastic Systems: Estimation Identification and
  Adaptive Control}.
\newblock Prentice-Hall, Inc., 1986.

\end{thebibliography}

\iflongversion
\else 
\begin{IEEEbiography}[{\includegraphics[width=1in]{ouyang}}]{Yi Ouyang}(S'13)
received the B.S. degree in Electrical Engineering from the National Taiwan University, Taipei, Taiwan in 2009.
He is currently a Ph.D. student in Electrical Engineering and Computer Science at the University of Michigan, Ann Arbor, MI, USA.
His research interests include stochastic scheduling, decentralized stochastic control and stochastic dynamic games with asymmetric information.

\end{IEEEbiography}

\begin{IEEEbiography}[{\includegraphics[width=1in]{tavafoghi}}]{Hamidreza Tavafoghi}
received the Bachelor's degree in Electerical Engineering at Sharif University of Technology, Tehran, Iran, 2011, and 
the Master's degree in Electricial Engineering: Systems from the University of Michigan, Ann Arbor, MI, USA, in 2013.
He is currently a PhD student in Electrical Engineering: Systems at the University of Michigan.

His reaserach interests lie in game theory, mechanism design, and stochastic control and their applications to power systems and communication networks.

\end{IEEEbiography}

\begin{IEEEbiography}[{\includegraphics[width=1in]{teneketzis}}]{Demosthenis Teneketzis}(M'87--SM'97--F'00)
received the diploma in electrical engineering from the University of Patras, Patras, Greece, and the M.S.,
E.E., and Ph.D. degrees, all in electrical engineering, from the Massachusetts Institute of Technology,
Cambridge, MA, USA, in 1974, 1976, 1977, and 1979, respectively.

He is currently Professor of Electrical Engineering and Computer Science at the University of Michigan,
Ann Arbor, MI, USA. In winter and spring 1992, he was a Visiting Professor at the Swiss Federal Institute
of Technology (ETH), Zurich, Switzerland. Prior to joining the University of Michigan, he worked for Systems Control, Inc., Palo Alto, CA,USA, and Alphatech, Inc., Burlington, MA, USA.
His research interests are in stochastic control, decentralized systems, queueing and communication networks, stochastic scheduling and resource allocation problems, mathematical economics, and discrete-event systems.
\end{IEEEbiography}
\fi

\end{document}